\newtheorem{theorem}{Theorem}
\newtheorem{proposition}[theorem]{Proposition}
\newtheorem{lemma}[theorem]{Lemma}
\newtheorem{corollary}[theorem]{Corollary}
\theoremstyle{definition}
\newtheorem{definition}[theorem]{Definition}
\theoremstyle{remark}
\newtheorem{remark}[theorem]{Remark}
\newtheorem{example}[theorem]{Example}
\crefname{def}{Definition}{Definitions}
\crefname{thm}{Theorem}{Theorems}
\crefname{ex}{Example}{Examples}
\crefname{rem}{Remark}{Remarks}
\crefname{prop}{Proposition}{Propositions}
\crefname{lem}{Lemma}{Lemmas}
\title{Syntactic Regions for Concurrent Programs}
\author{
  Samuel Mimram
  \institute{École polytechnique}
  \and
  Aly-Bora Ulusoy
  \institute{École polytechnique}
}
\begin{document}
\maketitle

\begin{abstract}
  In order to gain a better understanding of the state space of programs, with the
aim of making their verification more tractable, models based on directed
topological spaces have been introduced, allowing to take in account equivalence
between execution traces, as well as translate features of the execution (such
as the presence of deadlocks) into geometrical situations. In this context, many
algorithms were introduced, based on a description of the geometrical models as
regions consisting of unions of rectangles. We explain here that these
constructions can actually be performed directly on the syntax of programs, thus
resulting in representations which are more natural and easier to implement. In
order to do so, we start from the observation that positions in a program can be
described as partial explorations of the program. The operational semantics
induces a partial order on positions, and regions can be defined as formal
unions of intervals in the resulting poset. We then study the structure of such
regions and show that, under reasonable conditions, they form a boolean algebra
and admit a representation in normal form (which corresponds to covering a space
by maximal intervals), thus supporting the constructions needed for the purpose
of studying programs. All the operations involved here are given explicit
algorithmic descriptions.


\end{abstract}

\section{Introduction}
The verification of concurrent programs is notoriously complicated because of
the combinatorics involved when performing a naive transposition of the
techniques available for sequential programs. Namely, for a program consisting of
multiple processes running in parallel, we have to make sure that no problem can
arise for whichever respective scheduling of the processes, and the number of
resulting execution traces to check is in general exponential in the size of the
original program: this is sometimes called the ``state-space explosion
problem''. Moreover, some errors such as the presence of deadlocks are specific
to concurrent programs and are not addressed by traditional techniques. In order
to tackle these issues, starting in the 90s, a series of theoretical and
practical tools have been introduced based on the \emph{geometric semantics} of
concurrent programs~\cite{goubault2003some,goubault2005practical,fajstrup2006algebraic,datc,fajstrup2012trace}, which assigns to each such
program a topological space, such that the possible states of the program can be
interpreted as points in this space and an execution as a path which is
\emph{directed}, \ie intuitively respects the direction of time. An important
observation is that, in those semantics, two paths which are dihomotopic (can be
continuously deformed one to the other while respecting the direction of time)
correspond to executions which are equivalent from an operational point of view,
and we can thus hope to reduce the state space of the program by considering
paths up to dihomotopy. This approach is obviously inspired by the one taken in
the algebraic study of spaces (which considers algebraic constructions which
are invariant up to homotopy, such as the fundamental and higher homotopy
groups), although the situation for programs is made more difficult because one
has to take the direction of time in account.

While we believe that this line of research is conceptually enlightening and is
very fruitful, the fact that one has to resort to topological spaces in order to
study discrete structures such as the ones offered by programs is quite puzzling
and it is natural to wonder if the topology is really necessary here. It turns
out that it is not, and one of the main goals of this paper is to translate into
purely combinatorial terms an important algorithmic construction in geometric
semantics: the one of \emph{cubical regions}. We refer the reader to
\cite[Chapter~5]{datc} for a detailed presentation but, roughly, it consists in
describing the geometric semantics by covering it with cubes: in such a cube,
any two paths are homotopic, and it is thus enough to check only one path in
order to perform verification.
%
%
The starting point of this paper consists in considering that an execution of a
program consists in a partial exploration of a ``prefix'' of this program, that
we call here a \emph{position} (on the logical side, similar ideas have been
advocated by Girard when defining ludics~\cite{girard2001locus}). A ``portion''
of the program can then be identified by an interval of positions, which plays
the role of the cubes above: it denotes all the positions where we have explored
more than the first and less than the second. We can finally reach a convenient
description of regions in the program by taking finite unions of such intervals,
that we call here \emph{syntactic regions}. We formally define those here and
study their properties. Given a set of positions, there are in general multiple
regions which describe this set of positions: we show that, under mild assumptions, there is
always a canonical region describing a set of positions, the \emph{normal
  region}, which is in some sense the most economical way of describing
it. Moreover, we show that such regions form a boolean algebra
(\cref{thm:nf-ba}) and provide explicit ways of computing corresponding
canonical operations (union, intersection, complement) on syntactic regions,
which is useful in practice. Typically, we can compute a representation of the
state space of a concurrent program by first computing the region which is
forbidden because of the use of mutual exclusion primitives, and then compute
the state space as the complement of this region, on which we can use the
traditional techniques mentioned above.

In addition to providing us with a better understanding of the ``geometry of
programs'', the aim of the techniques developed is to reduce the size
of the state space in order to perform program verification, in a similar vein as partial order reduction
(por)~\cite{Godefroid95:por}. A detailed comparison between the two has been
performed both from a practical point of view~\cite{ts} and a theoretical
one~\cite{goubault:geom-por}: while the two techniques perform similarly on basic
examples, geometric techniques sometimes outperform por. Our approach aims at
being fully automated and is thus less powerful than techniques based on logic,
where some user input or annotations are required (such as the Owicki-Gries
method~\cite{owicki1976axiomatic}, rely/guarantee
rules~\cite{coleman2007structural}, concurrent separation
logic~\cite{o2007resources}, etc.). The effectiveness of our approach,
which is simpler and easier to implement than previous ones, is
illustrated with the online prototype~\cite{git}, where concrete examples of
programs on which it applies are provided (typically, producer-consumer
algorithms).



We begin by introducing the programming language considered here as well as an original
notion of position on its programs in \cref{sec:prog}, we then define the state space
of concurrent programs in \cref{sec:mutex} and explain how it can be described
using geometrical regions in \cref{sec:geom-sem}. We generalize the notion of
region to arbitrary posets in \cref{sec:region}, characterize when regions in
normal forms have a structure of boolean algebra in \cref{sec:boolean-algebra}
and finally provide explicit constructions for regions coming from programs in
\cref{sec:syntactic}.

\section{Concurrent programs and their positions}
\label{sec:prog}
In order to abstract away from practical details, we introduce here a simple
concurrent imperative programming language. We will only be interested in the
control-flow structure and thus the operations we use are not relevant for our
matters: we simply suppose fixed a set $\actions=\set{A,B,\ldots}$ of
\emph{actions}, which can be thought of as the effectful operations of our
language, such as modifying a variable or printing a result. The present work
would extend to more realistic languages without any difficulty directly related
to the main points raised here.

\begin{definition}
  \label[def]{def:program}
  The collection of \emph{programs}~$P$ is generated by the following grammar:
  \[
    P,Q
    \gramdef
    A
    \gramor
    \pseq PQ
    \gramor
    \ploop P
    \gramor
    \por PQ
    \gramor
    \ppar PQ
  \]
\end{definition}

\noindent
A program is thus either an action~$A$, or a sequential composition $\pseq PQ$
of two programs $P$ and~$Q$, or a conditional branching $\por PQ$ which will
execute either~$P$ or~$Q$, or a conditional loop $\ploop P$ which will
execute~$P$ a given number of times, or the execution $\ppar PQ$ of two
subprograms~$P$ and~$Q$ in parallel. As explained above, we do not take
variables into account, so that branching and looping is non-deterministic, but we
could handle proper conditional branching and while loops.

A position in a program describes where we are during an execution of it and
thus encodes the ``prefix'' of the program which has already been
executed. Formally, we begin by the following definition:

\begin{definition}
  The \emph{pre-positions}~$p$ are generated by the following grammar, with
  $n\in\N$:
  \[
    p,q
    \gramdef
    \bot
    \gramor
    \ptop
    \gramor
    \pseq pq
    \gramor
    \ploop[n]p
    \gramor
    \por pq
    \gramor
    \ppar pq
  \]
\end{definition}

\noindent
Those can be read as: we have not started (\resp we have finished) the execution
($\pbot$, \resp $\ptop$),
we are executing a sequence ($\pseq pq$), we are in the $n$-th iteration of a
loop ($\ploop[n]p$), we are executing a branch of a conditional branching
($\por pq$) and we are executing two programs in parallel~($\ppar pq$). Note
that the syntax of positions is essentially the same as the one of programs,
except that actions have been replaced by $\bot$ and $\top$ (and loops are
``unfolded'' in the sense that we keep track of the loop number).


Next, we single out the positions which are valid for a program. For instance,
we want that, in a program of the form~$\pseq PQ$, we can begin executing~$Q$
only after $P$ has been fully executed: this means that a position of the form
$\pseq pq$ with $q\neq\pbot$ is valid only when $p$ is~$\ptop$. Similarly, in a
conditional branching $\por PQ$, we cannot execute both subprograms: a
position~$\por pq$ is valid only when either~$p$ or~$q$ is $\pbot$.

\begin{definition}
  \label[def]{def:position}
  We write $\vpos Pp$ to indicate that a pre-position $p$ is a (valid)
  \emph{position} of a program~$P$, this predicate being defined inductively by
  the following rules:
  \[
    \begin{array}{c@{\qquad\quad}c@{\qquad\quad}c@{\qquad\quad}c}
      \inferrule{ }{\vpos P\pbot}
      &
      \inferrule{\vpos Pp}{\vpos{\pseq PQ}{\pseq p\pbot}}
      &
      \inferrule{\vpos Pp}{\vpos{\por PQ}{\por p\pbot}}
      &
      \inferrule{\vpos Pp}{\vpos{\ploop P}{\ploop[n]p}}
      \\[3ex]
      \inferrule{ }{\vpos P\ptop}
      &
      \inferrule{\vpos Qq}{\vpos{\pseq PQ}{\pseq\ptop q}}
      &
      \inferrule{\vpos Qq}{\vpos{\por PQ}{\por \pbot q}}
      &
      \inferrule{\vpos Pp\\\vpos Qq}{\vpos{\ppar PQ}{\ppar pq}}
    \end{array}
  \]
  We write $\positions(P)$ for the set of positions of a program~$P$.
\end{definition}

\noindent
We can assimilate a position in a program to a possible state during its
execution. With this point of view in mind, we can formalize the operational
semantics of our programming language as a relation $\vred Pp{p'}$, which
can be read as the fact that, in the position $p$, the program $P$ can reduce
in one step and reach the position~$p'$.

\begin{definition}
  \label[def]{def:reduction}
  The \emph{reduction} relation is defined inductively by
  {\small
  \[
    \begin{array}{c@{\qquad}c@{\qquad}c@{\qquad}c}
      \inferrule{ }{\vred A\pbot\ptop}
      \\[3ex]
      \inferrule{ }{\vred{\pseq PQ}\pbot{\pseq\pbot\pbot}}
      &
      \inferrule{ }{\vred{\por PQ}\pbot{\por\pbot\pbot}}
      &
      \inferrule{ }{\vred{\ploop P}\pbot{\ploop[0]\pbot}}
      &
      \inferrule{ }{\vred{\ppar PQ}{\pbot}{\ppar\pbot\pbot}}
      \\[3ex]
      \inferrule{\vred Pp{p'}}{\vred{\pseq PQ}{\pseq p\pbot}{\pseq{p'}\pbot}}
      &
      \inferrule{\vred Pp{p'}}{\vred{\por PQ}{\por p\pbot}{\por{p'}\pbot}}
      &
      \inferrule{\vred Pp{p'}}{\vred{\ploop P}{\ploop[n]p}{\ploop[n]{p'}}}
      &
      \inferrule{\vred P{p}{p'}\\\vpos Qq}{\vred{\ppar PQ}{\ppar pq}{\ppar{p'}q}}
      \\[3ex]
      \inferrule{\vred Qq{q'}}{\vred{\pseq PQ}{\pseq\ptop q}{\pseq\ptop{q'}}}
      &
      \inferrule{\vred Qq{q'}}{\vred{\por PQ}{\por \pbot q}{\por\pbot{q'}}}
      &
      \inferrule{ }{\vred{\ploop P}{\ploop[n]\ptop}{\ploop[n+1]\pbot}}
      &
      \inferrule{\vpos Pp\\\vred Q{q}{q'}}{\vred{\ppar PQ}{\ppar pq}{\ppar p{q'}}}
      \\[3ex]
      \inferrule{ }{\vred{\pseq PQ}{\pseq\ptop\ptop}\ptop}
      &
      \inferrule{ }{\vred{\por PQ}{\por\ptop\pbot}\ptop}
      &
      \inferrule{ }{\vred{\ploop P}{\ploop[n]\ptop}{\ptop}}
      &
      \inferrule{ }{\vred{\ppar PQ}{\ppar\ptop\ptop}\ptop}
      \\[3ex]
      &
      \inferrule{ }{\vred{\por PQ}{\por\pbot\ptop}\ptop}
      &
      \inferrule{ }{\vred{\ploop P}\pbot\ptop}
    \end{array}
  \]}
\end{definition}

\noindent
The above operational semantics is ``very fine-grained'' in the sense that it
features transitions which are not usually observable, such as
$\vred{\pseq PQ}\pbot{\pseq\pbot\pbot}$ which corresponds to passing from a
state where we have not yet started executing the program to a state where we
have started executing a sequence, but not yet its components. The usual
``real'' actions correspond to executions of the upper left rule
$\vred{A}\pbot\ptop$ which can be interpreted as executing an action~$A$.

\begin{lemma}
  \label[lem]{lem:red}
  If $\vred Pp{p'}$ holds then both $\vpos Pp$ and $\vpos P{p'}$ hold.
\end{lemma}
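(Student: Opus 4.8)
The plan is to prove both claims — $\vpos Pp$ and $\vpos P{p'}$ — simultaneously by induction on the derivation of $\vred Pp{p'}$ given by \cref{def:reduction}. For each of the (many) inference rules defining the reduction relation, I would check that, assuming the statement holds for the premises (if any), it holds for the conclusion. The two families of claims are sufficiently intertwined (especially in the parallel case, where validity of one component is a hypothesis and validity of the other must be transferred) that doing them together in a single induction is cleaner than trying to separate them.

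First I would handle the axiom rules (those with no premises): for instance $\vred A\pbot\ptop$ requires $\vpos A\pbot$ and $\vpos A\ptop$, both of which are immediate from the two axiom rules of \cref{def:position}; the rules $\vred{\pseq PQ}\pbot{\pseq\pbot\pbot}$, $\vred{\por PQ}\pbot{\por\pbot\pbot}$, $\vred{\ploop P}\pbot{\ploop[0]\pbot}$, $\vred{\ppar PQ}\pbot{\ppar\pbot\pbot}$ are similar, using $\vpos{-}\pbot$ for the source and building the target from the $\pbot$-axioms via the appropriate introduction rule (noting $\vpos P\pbot$ gives $\vpos{\pseq PQ}{\pseq\pbot\pbot}$ since $\pseq\pbot\pbot = \pseq p\pbot$ with $p = \pbot$, etc.); likewise the ``closing'' rules such as $\vred{\pseq PQ}{\pseq\ptop\ptop}\ptop$, $\vred{\por PQ}{\por\ptop\pbot}\ptop$, $\vred{\ploop P}{\ploop[n]\ptop}\ptop$, $\vred{\ppar PQ}{\ppar\ptop\ptop}\ptop$, $\vred{\por PQ}{\por\pbot\ptop}\ptop$, $\vred{\ploop P}\pbot\ptop$, $\vred{\ploop P}{\ploop[n]\ptop}{\ploop[n+1]\pbot}$ — each of these requires checking that the explicitly-written source pre-position is valid (which follows from the \cref{def:position} rules, e.g. $\vpos{\ploop P}{\ploop[n]\ptop}$ from $\vpos P\ptop$) and that the target is valid (either $\ptop$, immediate, or $\ploop[n+1]\pbot$ from $\vpos P\pbot$).

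Then I would treat the congruence rules with a premise $\vred Pp{p'}$ (or $\vred Qq{q'}$): by the induction hypothesis we get $\vpos Pp$ and $\vpos P{p'}$, and then applying the matching introduction rule of \cref{def:position} yields validity of both source and target of the conclusion — e.g. from $\vpos Pp$ and $\vpos P{p'}$ we get $\vpos{\pseq PQ}{\pseq p\pbot}$ and $\vpos{\pseq PQ}{\pseq {p'}\pbot}$, and similarly for the $\por$, $\ploop$ cases. For the two parallel congruence rules $\vred{\ppar PQ}{\ppar pq}{\ppar{p'}q}$ (with premises $\vred Pp{p'}$ and $\vpos Qq$) and its symmetric counterpart, the induction hypothesis gives $\vpos Pp$, $\vpos P{p'}$, and we already have $\vpos Qq$ in hand, so the parallel introduction rule $\inferrule{\vpos Pp\\\vpos Qq}{\vpos{\ppar PQ}{\ppar pq}}$ gives both $\vpos{\ppar PQ}{\ppar pq}$ and $\vpos{\ppar PQ}{\ppar{p'}q}$.

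I do not expect a genuine obstacle here — the statement is essentially a sanity-check bookkeeping lemma — but the one place to be slightly careful is matching the shape of the pre-positions appearing in \cref{def:reduction} against the shape required by the introduction rules of \cref{def:position}: for example, recognizing that $\pseq\pbot\pbot$ is of the form $\pseq p\pbot$ (take $p = \pbot$) and also of the form $\pseq\ptop q$ only when $\pbot = \ptop$ (it is not), so the only applicable validity rule is the first one, which indeed needs just $\vpos P\pbot$; and similarly that in the rule $\vred{\ploop P}{\ploop[n]\ptop}{\ploop[n+1]\pbot}$ the target $\ploop[n+1]\pbot$ is valid via $\vpos{\ploop P}{\ploop[m]p}$ with $m = n+1$, $p = \pbot$. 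The proof is thus a finite case analysis, each case being a one- or two-line invocation of an introduction rule, possibly preceded by an appeal to the induction hypothesis.
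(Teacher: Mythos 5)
Your proof is correct: a straightforward induction on the derivation of $\vred Pp{p'}$, checking each axiom and congruence rule against the introduction rules of \cref{def:position}, is exactly the intended argument (the paper omits the proof as routine). All the cases you single out, including the parallel congruence rules where $\vpos Qq$ is carried along as a premise, are handled as expected.
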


Suppose fixed a program~$P$. The \emph{state space}~$\sspace P$ of this program
is the graph whose vertices are the positions of~$P$ (\cref{def:position}) and
edges are the reductions (\cref{def:reduction}). An \emph{execution path} of~$P$
is a morphism of the free category~$\scat P$ over the graph~$\sspace P$. We
write $\vpath P\pi p{p'}$ (or simply $\pi:p\pathto q$) to indicate that $\pi$ is
an execution of~$P$ from $p$ to~$q$, we write $\pi\cdot\pi'$ for the composition
(also called \emph{concatenation}) of two paths $\pi:p\pathto p'$ and
$\pi':p'\pathto p''$, and we write $\varepsilon:p\pathto p$ for the identity
execution (also called \emph{empty path}).
An execution is \emph{elementary} when it consists of one reduction step. A
\emph{global execution} is an execution from~$\bot$ and a \emph{total execution}
is a global execution to $\top$.
We say that an execution~$\pi'$ is a \emph{prefix} of an execution $\pi$ when
there exists an execution $\pi''$ such that $\pi=\pi'\cdot\pi''$.
A position is \emph{reachable} when there exists a global path
$\pi:\pbot\pathto p$ with this position as target. The following lemma, together
with \cref{lem:red}, formalizes the fact the notion of validity of
\cref{def:position} captures exactly the expected positions.

\begin{lemma}
  \label[lem]{lem:reachable}
  Every position $p$ of~$P$ is reachable.
\end{lemma}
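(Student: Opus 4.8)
The plan is to prove that every position $p$ of $P$ is reachable by structural induction on the derivation of $\vpos Pp$ (equivalently, by induction on the structure of $P$ with a case analysis on the rule concluding $\vpos Pp$). For each of the eight inference rules in \cref{def:position} we exhibit a global path $\pi:\pbot\pathto p$, built from the reduction rules of \cref{def:reduction} together with inductively obtained paths for the subprograms.

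First I would dispatch the base cases. The rule $\vpos P\pbot$ is immediate with the empty path $\varepsilon:\pbot\pathto\pbot$. For $\vpos P\ptop$, I would prove a small auxiliary fact: for every program $P$ there is a total execution $\pbot\pathto\ptop$. This itself goes by induction on $P$: for $A$ use the rule $\vred A\pbot\ptop$; for $\pseq PQ$ concatenate the opening step $\vred{\pseq PQ}\pbot{\pseq\pbot\pbot}$, the lifted total execution of $P$ (rules $\vred{\pseq PQ}{\pseq p\pbot}{\pseq{p'}\pbot}$), the step $\vred{\pseq PQ}{\pseq\ptop\pbot}{\pseq\ptop\pbot}$ --- wait, more carefully, after $P$ finishes we are at $\pseq\ptop\pbot$, and there is no direct rule from there; instead note $\pseq\ptop\pbot$ reduces via the $\vred{\pseq PQ}{\pseq\ptop q}{\pseq\ptop{q'}}$ family starting from $q=\pbot$, using the lifted total execution of $Q$, reaching $\pseq\ptop\ptop$, then $\vred{\pseq PQ}{\pseq\ptop\ptop}\ptop$; for $\ploop P$ use $\vred{\ploop P}\pbot\ptop$ directly; for $\por PQ$ use $\vred{\por PQ}\pbot{\por\pbot\pbot}$ then the lifted total execution of $P$ to reach $\por\ptop\pbot$ then $\vred{\por PQ}{\por\ptop\pbot}\ptop$; for $\ppar PQ$ open with $\vred{\ppar PQ}\pbot{\ppar\pbot\pbot}$, run $P$'s lifted total execution (valid since $\vpos Q\pbot$), then $Q$'s lifted total execution (valid since $\vpos P\ptop$), reaching $\ppar\ptop\ptop$, then $\vred{\ppar PQ}{\ppar\ptop\ptop}\ptop$.

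For the inductive cases I would use the same lifting idea. Given $\vpos Pp$ with, by induction, a path $\sigma:\pbot\pathto p$ in $\sspace P$: for $\vpos{\pseq PQ}{\pseq p\pbot}$, prepend $\vred{\pseq PQ}\pbot{\pseq\pbot\pbot}$ and then apply the congruence rule $\vred{\pseq PQ}{\pseq q\pbot}{\pseq{q'}\pbot}$ edgewise along $\sigma$. For $\vpos{\pseq PQ}{\pseq\ptop q}$ with $\sigma:\pbot\pathto q$ in $\sspace Q$: take the total execution of $\pseq PQ$ to the intermediate state $\pseq\ptop\pbot$ (built from the auxiliary fact for $P$), then apply $\vred{\pseq PQ}{\pseq\ptop q'}{\pseq\ptop{q''}}$ edgewise along $\sigma$. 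The cases $\vpos{\por PQ}{\por p\pbot}$ and $\vpos{\por PQ}{\por\pbot q}$ are analogous using the opening rule $\vred{\por PQ}\pbot{\por\pbot\pbot}$ and the appropriate congruence family. For $\vpos{\ploop P}{\ploop[n]p}$: first reach $\ploop[0]\pbot$ via $\vred{\ploop P}\pbot{\ploop[0]\pbot}$; then, to get to $\ploop[n]\pbot$, iterate $n$ times the pattern ``lifted total execution of $P$ inside $\ploop[k]{-}$ reaching $\ploop[k]\ptop$, then $\vred{\ploop P}{\ploop[k]\ptop}{\ploop[k+1]\pbot}$''; finally apply $\vred{\ploop P}{\ploop[n]q}{\ploop[n]{q'}}$ edgewise along $\sigma:\pbot\pathto p$. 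For $\vpos{\ppar PQ}{\ppar pq}$ with $\sigma:\pbot\pathto p$ in $\sspace P$ and $\tau:\pbot\pathto q$ in $\sspace Q$: open with $\vred{\ppar PQ}\pbot{\ppar\pbot\pbot}$, run $\sigma$ via $\vred{\ppar PQ}{\ppar{p'}q_0}{\ppar{p''}q_0}$ (legal because each intermediate state of $\tau$'s source, namely $\pbot$, is a valid position of $Q$ --- indeed we keep the right component fixed at $\pbot$), reaching $\ppar p\pbot$, then run $\tau$ via $\vred{\ppar PQ}{\ppar p{q'}}{\ppar p{q''}}$ (legal because $\vpos Pp$ holds by hypothesis), reaching $\ppar pq$.

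The main obstacle is bookkeeping rather than conceptual: one must be careful that the side conditions on the parallel congruence rules ($\vpos Qq$ resp.\ $\vpos Pp$) are met at every step, which is why I keep one component fixed while moving the other, and invoke \cref{lem:red} (or the inductive hypothesis) to certify that the fixed component is a valid position throughout. A secondary subtlety is the loop case, where the number of ``restart'' steps is $n$ and one should phrase the iteration as an inner induction on $n$; the auxiliary lemma that every program admits a total execution $\pbot\pathto\ptop$ is exactly what makes each loop iteration go through, so it is cleanest to state and prove that lemma first and then reuse it uniformly in the sequential, branching, loop, and parallel cases.
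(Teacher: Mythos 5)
Your proof is correct: the structural induction on the derivation of $\vpos Pp$, together with the auxiliary lemma that every program admits a total execution $\pbot\pathto\ptop$ (needed for the $\pseq\ptop q$, loop, and parallel cases, and with the validity side conditions of the parallel congruence rules properly discharged), is exactly the routine argument intended here, which the paper leaves implicit. No gaps; at most, tidy up the mid-sentence self-correction in the sequential case of the auxiliary lemma before writing it up.
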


As customary, we also write $\vreds Pp{p'}$ (or sometimes simply $p\pathto p'$)
when there exists an execution $\vpath P\pi p{p'}$: the resulting relation
$\pathto$ on positions of~$P$ is the reflexive and transitive closure of the
reduction relation~$\to$. This relation is induced by a partial order relation,
as we now explain.

\begin{definition}
  \label[def]{def:pos-order}
  We write $\leq$ for the smallest reflexive relation on the positions of~$P$
  such that
  \[
    \begin{array}{c@{\qquad\quad}c@{\qquad\quad}c@{\qquad\quad}c}
      \inferrule{ }{\pbot\leq p}
      &
      \inferrule{p\leq p'\\q\leq q'}{\pseq pq\leq\pseq{p'}{q'}}
      &
      \inferrule{p\leq p'\\q\leq q'}{\ppar pq\leq\ppar{p'}{q'}}
      &
      \inferrule{p\leq p'}{\ploop[n]p\leq\ploop[n]{p'}}
      \\
      \inferrule{ }{p\leq\ptop}
      &
      \inferrule{p\leq p'\\q\leq q'}{\por pq\leq\por{p'}{q'}}
      &
      &
      \inferrule{ }{\ploop[m]p\leq\ploop[n]{p'}}
    \end{array}
  \]
  for $m<n$.
\end{definition}

\begin{proposition}
\label[prop]{prop:sem-equi}
  Given two positions $p$ and $p'$ of~$P$, we have $p\leq p'$ if and only
  if $p\pathto p'$.
\end{proposition}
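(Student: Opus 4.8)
The plan is to prove the two implications separately, reducing each to simpler inductive statements about the one-step reduction $\to$.

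For the implication $p\pathto p'\Rightarrow p\leq p'$, it suffices, since $\pathto$ is by definition the reflexive–transitive closure of $\to$ and $\leq$ is reflexive and transitive, to check that $\vred Pp{p'}$ implies $p\leq p'$, which I would do by induction on the derivation of $\vred Pp{p'}$, inspecting each rule of \cref{def:reduction}. Every ``initialization'' axiom ($\vred A\pbot\ptop$, $\vred{\pseq PQ}\pbot{\pseq\pbot\pbot}$, $\vred{\ploop P}\pbot{\ploop[0]\pbot}$, \etc, and also $\vred{\ploop P}\pbot\ptop$) is an instance of $\pbot\leq p$; every ``termination'' axiom ($\vred{\pseq PQ}{\pseq\ptop\ptop}\ptop$, $\vred{\por PQ}{\por\ptop\pbot}\ptop$, $\vred{\ploop P}{\ploop[n]\ptop}\ptop$, \etc) is an instance of $p\leq\ptop$; the loop-increment rule $\vred{\ploop P}{\ploop[n]\ptop}{\ploop[n+1]\pbot}$ is the instance of $\ploop[m]p\leq\ploop[n']{p'}$ with $m=n<n+1=n'$; and each congruence rule is handled by the induction hypothesis in the active subprogram together with reflexivity in the inactive ones and the matching congruence rule of \cref{def:pos-order}. (Transitivity of $\leq$, used here, is itself a routine induction on the two derivations; reflexivity holds by definition.)

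For the converse, $p\leq p'\Rightarrow p\pathto p'$, I would first isolate three auxiliary facts. \emph{(i)} $\pbot\pathto p$ for every position $p$ of $P$: this is exactly \cref{lem:reachable}. \emph{(ii)} $p\pathto\ptop$ for every position $p$ of $P$: proved by induction on $P$; the interesting cases are $\pseq QR$, $\por QR$ and $\ppar QR$, where one first drives the active (or left) component to $\ptop$ using the induction hypothesis, lifted through the appropriate congruence rule, then the remaining component, and finally applies a termination axiom. \emph{(iii)} reduction is a congruence for every constructor — \eg $\vred Pp{p'}$ entails $\vred{\pseq PQ}{\pseq p\pbot}{\pseq{p'}\pbot}$, and, for any position $q$ of $Q$, $\vred{\ppar PQ}{\ppar pq}{\ppar{p'}q}$, and symmetrically for the other components and constructors — so that a path $p\pathto p'$ in a subprogram lifts to a path between the corresponding positions of the compound program. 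Granting these, I would argue by induction on the derivation of $p\leq p'$ (the premises of each rule being about positions of a subprogram, where the induction hypothesis is applied): the reflexive case gives the empty path; $\pbot\leq p'$ and $p\leq\ptop$ are covered by \emph{(i)} and \emph{(ii)}; each congruence rule is dispatched using the induction hypothesis on the premises, fact \emph{(iii)}, and concatenation of paths. The one rule requiring a small computation is $\ploop[m]p\leq\ploop[n]{p'}$ for $m<n$, where one assembles the path $\ploop[m]p\pathto\ploop[m]\ptop\to\ploop[m+1]\pbot\pathto\ploop[m+1]\ptop\to\cdots\to\ploop[n]\pbot\pathto\ploop[n]{p'}$, using \emph{(ii)} lifted through \emph{(iii)} to climb from $\ploop[i]\pbot$ to $\ploop[i]\ptop$, the loop-increment reduction to pass from $\ploop[i]\ptop$ to $\ploop[i+1]\pbot$, and \emph{(i)} lifted through \emph{(iii)} for the final segment $\ploop[n]\pbot\pathto\ploop[n]{p'}$.

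The step I expect to be the main obstacle is the auxiliary fact \emph{(ii)}, $p\pathto\ptop$ for every position: it is a genuine structural induction with a case analysis on the shape of $p$ inside each constructor, and it is the place where the very fine-grained bookkeeping reductions (such as $\pbot\to\pseq\pbot\pbot$) must be threaded correctly into a path ending at $\ptop$. The loop chaining above is the other slightly delicate point, but it becomes purely mechanical once the three auxiliary facts are in hand.
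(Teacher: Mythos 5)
Your proposal is correct: both directions are sound, the reduction of the forward implication to one-step reductions plus a separately proved transitivity of $\leq$ avoids any circularity with \cref{prop:pos-poset}, and the auxiliary facts (reachability of every position from $\pbot$, $p\pathto\ptop$ for every position, and congruence lifting of paths) together with the loop-chaining argument for the rule $\ploop[m]p\leq\ploop[n]{p'}$ with $m<n$ are exactly what is needed. The paper states this proposition without an explicit proof, and your argument is the intended, standard one, so nothing further is required.
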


\begin{proposition}
  \label[prop]{prop:pos-poset}
  The relation $\leq$ is a partial order on the set $\positions(P)$ of positions
  of~$P$.
\end{proposition}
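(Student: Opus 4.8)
The plan is to prove the three poset axioms — reflexivity, antisymmetry, and transitivity — for $\leq$ on $\positions(P)$. Reflexivity holds by definition, since $\leq$ is declared to be the smallest reflexive relation satisfying the given closure rules. For the remaining two axioms, the cleanest route is to exploit \cref{prop:sem-equi}: since $p \leq p'$ iff $p \pathto p'$, and $\pathto$ is by construction the reflexive–transitive closure of the reduction relation $\to$, transitivity of $\leq$ is immediate (composition of execution paths). So the entire substance of the proposition reduces to \textbf{antisymmetry}.

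For antisymmetry I would argue by induction on the structure of the program $P$ (equivalently, on the common shape of the pre-positions involved), using a suitable notion of ``size'' or ``progress'' of a position as the key invariant. The idea is to define a monotone quantity — for instance, a measure $\mu_P(p) \in \N$ counting the number of elementary reduction steps needed to produce $p$ from $\pbot$, or more simply a lexicographically-ordered tuple tracking, at each node, whether we are at $\pbot$, strictly inside, or at $\ptop$, together with loop counters $n$ — and to show that a single reduction strictly increases it, hence $p \pathto p'$ with $p \neq p'$ forces $\mu_P(p) < \mu_P(p')$, ruling out $p' \pathto p$. Alternatively, one proves antisymmetry directly by induction on the derivation of $p \leq p'$: in each case ($\pbot \leq p$, the congruence rules for $\pseq{}{}$, $\por{}{}$, $\ppar{}{}$, $\ploop[n]{}$, and the loop rule with $m < n$), one checks that if also $p' \leq p$ then the subderivations give $p = p'$ componentwise by the induction hypothesis — the only rule that can raise a genuine obstacle being the $\ploop[m]p \leq \ploop[n]{p'}$ rule for $m<n$, whose converse would need $n < m$, a contradiction.

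The main obstacle I anticipate is handling the interaction of the ``shortcut'' rules $\pbot \leq p$ and $p \leq \ptop$ with the congruence rules, because $\leq$ is defined as a closure and a single pair $p \leq p'$ may be derivable in several ways; one must ensure the inductive argument does not get stuck when, say, $p \leq p'$ comes from $\pbot \leq p'$ while $p' \leq p$ comes from a congruence rule. This is precisely where passing through \cref{prop:sem-equi} and a numeric measure is more robust than a bare induction on derivations: one defines $\mu_P : \positions(P) \to \N$ by structural recursion on $P$ (with $\mu_{A}(\pbot) = 0$, $\mu_A(\ptop) = 1$, $\mu_{\pseq PQ}(\pseq p q) = 1 + \mu_P(p) + \mu_Q(q)$ when meaningful, adding a fixed offset for $\ptop$ at the top, $\mu_{\ploop P}(\ploop[n]p) = 1 + \text{(contribution of the first $n$ completed iterations)} + \mu_P(p)$, and $\mu_{\ppar PQ}(\ppar p q) = 1 + \mu_P(p) + \mu_Q(q)$), checks by induction on the derivation of $\vred Pp{p'}$ that $\mu_P(p) < \mu_P(p')$, concludes $p \pathto p'$ and $p \neq p'$ imply $\mu_P(p) < \mu_P(p')$, and hence antisymmetry. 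Combined with reflexivity (by definition) and transitivity (from \cref{prop:sem-equi} and closure under composition), this establishes that $\leq$ is a partial order on $\positions(P)$.
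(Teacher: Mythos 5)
The step that actually carries the proposition is antisymmetry, and the argument you finally commit to — an $\N$-valued measure $\mu_P:\positions(P)\to\N$ strictly increasing along every reduction — cannot exist once loops are present. Take $P=\ploop A$: for every $n$ there is a reduction $\ploop[n]{\ptop}\to\ptop$, while $\pbot\to\ploop[0]{\pbot}\to\ploop[0]{\ptop}\to\ploop[1]{\pbot}\to\cdots\to\ploop[n]{\ptop}$ is a chain of length $2n+2$ along which a strictly monotone $\mu$ must increase at each step; hence $\mu(\ptop)>\mu(\ploop[n]{\ptop})\geq 2n+2$ for all $n$, which is impossible for a natural number. (The same problem shows that ``number of elementary steps needed to reach $p$ from $\pbot$'' is not monotone: $\ptop$ of the loop is reachable in one step via $\vred{\ploop P}\pbot\ptop$, so the step $\ploop[n]{\ptop}\to\ptop$ decreases it.) Your own recursive clauses betray this: the clause for $\ploop P$ assigns values to the positions $\ploop[n]{p}$ but there is no consistent value left for the loop's $\ptop$. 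A measure-based proof can be repaired, but only by taking values in a well-order with limit points — ordinals below $\omega^{d}$, or tuples with one lexicographic coordinate per loop-nesting level, which is what your parenthetical ``lexicographic tuple'' remark gestures at — and that version is not developed enough to check.

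Ironically, the route you set aside, direct induction on the derivations, goes through smoothly and needs no measure. The obstacle you anticipate from the rules $\pbot\leq p$ and $p\leq\ptop$ dissolves by inversion on shapes: $\pbot$ and $\ptop$ are atomic pre-positions, syntactically distinct from every composite shape $\pseq pq$, $\por pq$, $\ppar pq$, $\ploop[n]{p}$, so a derivation of $q\leq\pbot$ forces $q=\pbot$, a derivation of $\ptop\leq q$ forces $q=\ptop$, and when $p\leq p'$ and $p'\leq p$ both hold with $p$ composite, both derivations must be reflexivity or the congruence rule for that head constructor (the two loop rules with indices $m<n$ and $n<m$ excluding each other), after which the induction hypothesis on the components of the corresponding subprograms concludes $p=p'$. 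Reflexivity is indeed by definition, and invoking \cref{prop:sem-equi} for transitivity is admissible given the paper's ordering of statements; just be aware that the usual proof of the direction ``$p\pathto p'$ implies $p\leq p'$'' of \cref{prop:sem-equi} itself rests on a transitivity-style lemma for $\leq$, so a fully self-contained proof would establish transitivity by the same kind of inversion-based induction rather than by citation.
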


\begin{proposition}
  \label[prop]{prop:lat-op}
  The partial order $\leq$ on $\positions(P)$ is a bounded lattice, with
  $\pbot$ and $\ptop$ as smallest and largest elements, with supremum being
  determined by
  \begin{align*}
    (\pseq pq)\vee(\pseq{p'}{q'})&=\pseq{(p\vee p')}{(q\vee q')}
    &
    \ploop[n]{p}\vee\ploop[n]{p'}&=\ploop[n]{(p\vee p')}
    &
    (\por p\pbot)\vee(\por{p'}\pbot)&=\por{(p\vee p')}{\pbot}
    \\
    (\ppar pq)\vee(\ppar{p'}{q'})&=\ppar{(p\vee p')}{(q\vee q')}
    &
    \ploop[m]p\vee\ploop[n]{p'}&=\ploop[n]{p'}
    &
    (\por \pbot q)\vee(\por\pbot{q'})&=\por{\pbot}{(q\vee q')}
    \\
    &&&&
    (\por p\pbot)\vee(\por\pbot q)&=\ptop
  \end{align*}
  for $m<n$, and where we suppose $(p,q)\neq(\pbot,\pbot)$ in the lower right
  rule. The infimum admits a similar description.
\end{proposition}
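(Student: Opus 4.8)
The plan is to proceed by structural induction on $P$, proving simultaneously that $\positions(P)$ is a bounded lattice with bottom $\pbot$ and top $\ptop$, that binary suprema are computed by the displayed equations — extended by the clauses $\pbot\vee x=x$ and $\ptop\vee x=\ptop$, which together with the observation that two positions distinct from $\pbot$ and $\ptop$ necessarily carry the same head constructor make these equations exhaustive — and that binary infima are computed by the dual equations. It is essential to treat $\vee$ and $\wedge$ together rather than derive one from the other, since $\positions(P)$ is infinite as soon as $P$ contains a loop. The base case $P=A$ is immediate: $\positions(A)=\{\pbot,\ptop\}$ is a two-element chain.

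The one genuinely nontrivial ingredient is an inversion principle for the order of \cref{def:pos-order}: for a position $p$ of $P$ other than $\pbot$ and $\ptop$, the positions $r$ with $p\leq r$ are exactly $\ptop$ together with those of the same head constructor dominating $p$ componentwise, with, in the case of a loop position $\ploop[m]{p''}$, the additional family $\ploop[n]{p'}$ for $m<n$ and $p'$ arbitrary. I would establish this by checking that the relation so described is reflexive and closed under the generating rules of \cref{def:pos-order}, hence contains $\leq$, the reverse inclusion following from those same rules together with the transitivity provided by \cref{prop:pos-poset}; alternatively one may argue through \cref{prop:sem-equi}, inducting on the length of a reduction path and using that a reduction out of a position of such a (non-extremal) syntactic shape lands again in that shape or in $\ptop$, which is directly verified on the rules of \cref{def:reduction}. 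I expect this inversion principle to be the main obstacle — one has to make sure that closing the generating rules under reflexivity and (a posteriori) transitivity creates no unexpected comparabilities — the delicate points being the interaction of the loop rule with the congruence rule for loops, and the fact that $\pseq\ptop\pbot$ and $\por\pbot\pbot$ are each at once a ``left'' and a ``right'' position, so that the identifications below have to be set up correctly.

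Granting the inversion principle, each inductive step amounts to reading off joins and meets in an explicit poset built from $\positions(Q)$ and $\positions(R)$ by constructions under which bounded lattices are stable. For $\pseq QR$, inversion identifies the positions other than $\pbot,\ptop$ — namely $\pseq p\pbot$ with $p\in\positions(Q)$ and $\pseq\ptop q$ with $q\in\positions(R)$ — with the ordinal sum obtained by glueing the top of $\positions(Q)$ to the bottom of $\positions(R)$ (every $\pseq p\pbot$ lying below every $\pseq\ptop q$), to which $\pbot$ and $\ptop$ are then freely adjoined as a new bottom and top. For $\por QR$ one gets instead the wedge of $\positions(Q)$ and $\positions(R)$ glued along their bottoms, the glueing point being $\por\pbot\pbot$, again with $\pbot$ and $\ptop$ adjoined; for $\ppar QR$ the product $\positions(Q)\times\positions(R)$ with $\pbot$ and $\ptop$ adjoined; and for $\ploop Q$ the $\omega$-indexed ordinal sum of copies of $\positions(Q)$ (copy $n$ lying entirely below copy $n+1$) with $\pbot$ and $\ptop$ adjoined. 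An ordinal sum or wedge of bounded lattices glued at extremal points, a product of bounded lattices, and the operation of adjoining a fresh bottom or top all yield bounded lattices, so $\positions(P)$ is one; the induction hypothesis ensures the componentwise operations stay inside $\positions(Q)$ and $\positions(R)$, a short case analysis confirms the resulting pre-position is valid (the only nonimmediate case being $\pseq QR$, where validity of $\pseq{(p\vee p')}{(q\vee q')}$ requires $p\vee p'=\ptop$ or $q\vee q'=\pbot$, which indeed holds), and computing the lattice operations of the ambient poset gives precisely the displayed equations together with their duals: $\pseq{(p\wedge p')}{(q\wedge q')}$ for sequencing, $\por{(p\wedge p')}\pbot$, $\por\pbot{(q\wedge q')}$ and $\por\pbot\pbot$ for choice, $\ppar{(p\wedge p')}{(q\wedge q')}$ for parallel, and $\ploop[n]{(p\wedge p')}$ within a copy and $\ploop[m]p$ across copies $m<n$ for loops. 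This closes the induction.
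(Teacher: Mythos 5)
The paper states this proposition without proof, so there is nothing to compare against; your argument is correct and is the natural one: structural induction on~$P$, an inversion lemma pinning down the order of \cref{def:pos-order} on non-extremal positions, and the identification of the positions of a sequence, a choice, a parallel composition and a loop as, respectively, an ordinal sum glued at $\pseq\ptop\pbot$, a wedge glued at $\por\pbot\pbot$, a product, and an $\omega$-indexed ordinal sum, each with fresh $\pbot$ and $\ptop$ adjoined, from which the displayed formulas (and their duals for infima) are read off. The only point to phrase carefully is that the wedge glued at the bottoms is a lattice only once the fresh top is adjoined (cross-branch joins being $\ptop$), which your explicit case $(\por p\pbot)\vee(\por\pbot q)=\ptop$ already accounts for, so the proof goes through as you describe.
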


We recall that a poset $(X,\leq)$ is a \emph{well-order} when, for every infinite
sequence $(x_i)_{i\in\N}$ of elements, there are indices $i<j$ such that
$x_i\leq x_j$. This is equivalent to requiring that the poset is both well-founded
(every infinite decreasing sequence of elements is eventually stationary) and such that
every antichain (set of pairwise incomparable elements) is finite. The following
will be useful:

\begin{proposition}
  \label[prop]{prop:pos-wo}
  The poset~$\positions(P)$ is a well-order.
\end{proposition}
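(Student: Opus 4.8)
The plan is to proceed by structural induction on the program~$P$, using the standard reformulation of the well-order condition: a poset is a well-order exactly when every infinite sequence of its elements admits an infinite non-decreasing subsequence (equivalently, as recalled above, when it is well-founded and all of its antichains are finite). The base case $P=A$ is immediate, since $\positions(A)=\set{\pbot,\ptop}$ is finite.

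For the inductive step, I would first observe that in each case $\positions(P)$ is obtained from the posets $\positions(Q)$ and $\positions(R)$ of the immediate subprograms by a simple construction, together with the addition of a fresh least element $\pbot$ and a fresh greatest element $\ptop$ (adding two extrema never destroys the well-order property, so they can be ignored). Explicitly, using \cref{def:pos-order}: for $\pseq QR$ the positions other than $\pbot,\ptop$ are those of the form $\pseq p\pbot$ with $p\in\positions(Q)$ — a copy of $\positions(Q)$ — together with those of the form $\pseq\ptop q$ with $q\in\positions(R)$ — a copy of $\positions(R)$ — every element of the first copy being below every element of the second, so this is an ordinal-sum-like gluing; for $\por QR$ it is the disjoint union of a copy of $\positions(Q)$ and a copy of $\positions(R)$ sharing their least elements; for $\ploop Q$ it is the lexicographic product in which $\ploop[m]p\leq\ploop[n]q$ means $m<n$, or $m=n$ and $p\leq q$; and for $\ppar QR$ it is the cartesian product $\positions(Q)\times\positions(R)$ with the componentwise order. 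Given an infinite sequence of positions of~$P$, the pigeonhole principle lets me pass to a subsequence avoiding $\pbot$ and $\ptop$, after which every term lies in the relevant construction and the induction hypothesis becomes applicable.

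The cases $\pseq$ and $\por$ are then routine: after a further use of pigeonhole one may assume all terms lie in a single copy of $\positions(Q)$ or of $\positions(R)$ (in the $\pseq$ case, if infinitely many terms lie in each copy, any term of the first together with a later term of the second already forms a comparable pair), and one concludes by the induction hypothesis. The case $\ploop Q$ needs a small argument on the loop counters $n_i$: if they are unbounded one finds $i<j$ with $n_i<n_j$, whence $\ploop[n_i]{p_i}\leq\ploop[n_j]{p_j}$ directly; otherwise some counter value is attained infinitely often and one reduces to the induction hypothesis for $\positions(Q)$. The genuinely non-trivial point — and the step I expect to be the main obstacle — is $\ppar QR$, which amounts to showing that a cartesian product of two well-orders is a well-order, a form of Dickson's lemma: using the subsequence reformulation, I would extract from the sequence of first components an infinite subsequence along which they are non-decreasing (induction hypothesis on $\positions(Q)$), then from it a further infinite subsequence along which the second components are non-decreasing (induction hypothesis on $\positions(R)$); along the doubly-extracted subsequence the positions $\ppar{q_i}{r_i}$ are non-decreasing, which yields the required comparable pair. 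It is precisely for this case that one wants the full strength of the well-order hypothesis on subprograms — the ``finite antichains'' property alone is not stable under products — which is why I would carry the subsequence characterization, rather than merely ``no infinite antichain'', through the induction.
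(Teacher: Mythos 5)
Your proof is correct and follows essentially the same route as the paper: induction on the structure of~$P$, reducing each case to closure of well-orders under the relevant constructions (sums, products, and the $\N$-indexed loop unfolding), with the parallel case handled by the standard double-extraction (Dickson-style) argument. You simply spell out in detail what the paper's proof leaves as a sketch, so no further changes are needed.
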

\begin{proof}
  The proof proceeds by induction on~$P$. The inductive cases are easily deduced
  from the fact that well-orders are closed under products, coproducts and
  contain $(\N,\leq)$.
\end{proof}

\section{Concurrent programs with mutexes and their state space}
\label{sec:mutex}
In practice, in order to avoid unspecified behaviors due to concurrency,
programs use primitive operations such as mutexes~\cite{dijkstra1965solution}
with the aim of preventing that two subprograms access simultaneously to a
shared resource such as memory (typically, the result of two concurrent writings
at a same memory location is unspecified in many languages). In order to account
for this, we suppose fixed a set $\mutexes$ of \emph{mutexes} such that, for
every $a\in\mutexes$, there are two associated actions $\P a$ and $\V a$ in
$\actions$, respectively corresponding to \emph{taking} and \emph{releasing} the mutex $a$,
as defined in~\cite{dijkstra1965solution}. A
mutex will be such that it can be can be taken by at most one subprogram: if
another subprogram tries to take an already taken resource, it will be
``frozen'' until the mutex is available again, \ie the first subprogram releases
the mutex. This description is formalized below on the operational semantics,
see also~\cite[Chapter~3]{datc}.

The consumption of mutexes by an execution is kept track of by considering
functions in~$\Z^\mutexes$ which to every mutex associate the number of times
it has been taken or released (where releasing is considered as the opposite of
taking). Given two functions $\mu,\mu'\in\Z^\mutexes$, we write $\mu+\mu'$ for
their pointwise sum, \ie $(\mu+\mu')(a)=\mu(a)+\mu'(a)$ for $a\in\mutexes$, and
similarly for the opposite $-\mu$ of a function~$\mu$. Given $a\in\mutexes$, we
write $\delta_a\in\Z^\mutexes$ for the function such that $\delta_a(a)=1$ and
$\delta_b(a)=0$ for $b\neq a$, we also write $\ul 0$ for the constant function
equal to $0$.

\begin{definition}
  Given an execution $\vpath P\pi p{p'}$, we write
  $
    \intp{\vpath P\pi p{p'}}:\mutexes\to\Z
  $
  (or sometimes simply $\intp\pi$) for its \emph{consumption} of mutexes. This
  function is defined for elementary executions by
  \begin{itemize}
  \item $\intp{\vpathred{\P a}\pi\pbot\ptop}=\delta_a$, for $a\in\mutexes$,
  \item $\intp{\vpathred{\V a}\pi\pbot\ptop}=-\delta_a$, for $a\in\mutexes$,
  \item for each reduction $\pi$, different from the two above, deduced with a
    rule of \cref{def:reduction} with no premise we have $\intp{\pi}=\ul 0$,
  \item for each reduction $\pi$ deduced with a rule of \cref{def:reduction}
    with one reduction $\pi'$ as premise, we have $\intp{\pi}=\intp{\pi'}$,
  \end{itemize}
  and extended as an action of the category of executions on the monoid of
  consumptions, \ie $\intp{\varepsilon}=\ul 0$ and
  $\intp{\pi\cdot\pi'}=\intp{\pi'}+\intp{\pi}$.
\end{definition}

\begin{example}
  Writing $\pi$ for any total execution of the program
  $\pseq{\P a}{(\ppar{\P a}{\V b})}$, with $a\neq b$, we have
  $\intp{\pi}(a)=2$, $\intp{\pi}(b)=-1$ and
  $\intp{\pi}(c)=0$ for $c\neq a$ and $c\neq b$.
\end{example}

\noindent
A global execution $\pi$ is \emph{valid} when for every prefix $\pi'$ of~$\pi$,
and any mutex $a\in\mutexes$, we have $0\leq\intp{\pi'}(a)\leq 1$. Such an
execution is namely compatible with the semantics of mutexes in the sense that
\begin{itemize}
\item no mutex is taken twice (without having been released in between),
\item no mutex is released without having been taken first.
\end{itemize}

A program is \emph{conservative} when any two executions $\pi:p\pathto p'$ and
$\pi':p\pathto p'$ with the same source and the same target have the same
resource consumption: $\intp\pi=\intp{\pi'}$. In the following, unless otherwise
stated, we assume that all the programs we consider are conservative: it is
often satisfied in practice and simplifies computations because we can consider
validity of positions instead of executions. This assumption is classical in
geometric semantics~\cite{datc} and is based on the observation that the use of
mutexes is quite costly (because they restrict parallelism) and therefore
usually restricted to small and "local" portions of the code. This is the reason
why it is usually satisfied, at least up to simple rewriting of the code (more
complex code can also be handled by ``duplicating'' some of the positions so
that the program becomes conservative).
This condition can easily be tested as follows.

\begin{definition}
  The \emph{consumption} of a program~$P$ is the partial function
  $\Delta(P):\mutexes\to\Z$ defined by induction on~$P$ by
  \begin{align*}
    \Delta(\P a)&=\delta_a
    &
    \Delta(\V a)&=-\delta_a
    &
    \Delta(A)&=\ul 0
    &
    \Delta(\pseq PQ)&=\Delta(\ppar PQ)=\Delta(P)+\Delta(Q)
  \end{align*}
  and
  \begin{align*}
    \Delta(\por PQ)&=\Delta(P)\quad\text{if $\Delta(P)=\Delta(Q)$}
    &
    \Delta(\ploop P)&=\ul 0\quad\text{if $\Delta(P)=\ul 0$}
  \end{align*}
\end{definition}

\noindent
Note that, because of the side conditions in the two last cases, $\Delta(P)$ is
not always well defined, \eg for $\ploop{\P a}$ or $\por{\P a}{\P b}$.

\begin{proposition}
  \label[prop]{prop:conservative-delta}
  A program~$P$ is conservative if and only if $\Delta(P)$ is well-defined and,
  in this case, we have $\Delta(P)=\intp{\pi}$ for any total execution $\pi$ of~$P$.
\end{proposition}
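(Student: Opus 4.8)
The plan is to prove both directions by structural induction on~$P$, using \cref{prop:sem-equi} and \cref{lem:reachable} to move freely between executions and the order on positions. The key bookkeeping fact I would establish first, as an auxiliary claim proved by the same induction, is: \emph{whenever $\Delta(P)$ is well-defined, every total execution $\pi$ of~$P$ satisfies $\intp\pi = \Delta(P)$, and moreover every execution $\pi : p \pathto p'$ between positions of~$P$ has a consumption depending only on $p$ and $p'$}. Having this combined statement makes the induction go through, since the parallel and sequential cases need the "path-independence between arbitrary endpoints" property for the subprograms, not merely for total executions.

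For the forward direction, suppose $P$ is conservative. I would argue the contrapositive: if $\Delta(P)$ is undefined, the obstruction occurs at a subterm of one of the two shapes $\por QR$ with $\Delta(Q) \neq \Delta(R)$, or $\ploop Q$ with $\Delta(Q) \neq \ul 0$ (here using the inductive hypothesis that, below the first point of failure, $\Delta$ is defined and computes consumptions of total sub-executions). In the branching case, one builds two total executions of~$P$ that agree until they enter the $\por QR$ subterm, then one goes through the $Q$-branch and the other through the $R$-branch, and then rejoin; their consumptions differ by $\Delta(Q) - \Delta(R) \neq \ul 0$, contradicting conservativity. In the loop case, one compares a total execution that runs the body $\ploop Q$ zero times (using the rule $\vred{\ploop Q}\pbot\ptop$) with one that runs it once; these two total executions have consumptions differing by $\Delta(Q) \neq \ul 0$. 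Propagating such a pair of executions through the enclosing program context — which only prepends and appends a common execution prefix/suffix, by the reduction rules — preserves the difference in consumption, so $P$ itself is not conservative.

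For the backward direction, assume $\Delta(P)$ is well-defined; I prove the auxiliary claim by induction on~$P$. The base cases $A$, $\P a$, $\V a$ are immediate from the definitions of $\intp{-}$ on elementary reductions and of $\Delta$. For $\pseq QR$: a total execution factors (up to reordering forced by the reduction rules, which require $Q$ to reach $\ptop$ before $R$ starts) as a total execution of~$Q$ followed by a total execution of~$R$, so $\intp\pi = \Delta(R) + \Delta(Q) = \Delta(\pseq QR)$; the path-independence for general endpoints follows because any position of $\pseq QR$ is of the form $\pseq p \pbot$ or $\pseq \ptop q$ and the consumption splits accordingly. The case $\ppar QR$ is similar but uses that any total execution is an interleaving of a total $Q$-execution and a total $R$-execution, and consumption is additive over the two components regardless of the interleaving, since the $\ppar{}{}$ reduction rules copy the consumption of the active side verbatim; additivity of $+$ in $\Z^\mutexes$ gives $\Delta(Q) + \Delta(R)$. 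For $\por QR$ with $\Delta(Q) = \Delta(R)$: a total execution passes through exactly one branch, contributing $\Delta(Q)$ or $\Delta(R)$, which are equal, hence equal to $\Delta(\por QR)$. For $\ploop Q$ with $\Delta(Q) = \ul 0$: a total execution consists of some number $k \geq 0$ of full traversals of the body, each contributing $\Delta(Q) = \ul 0$, so $\intp\pi = \ul 0 = \Delta(\ploop Q)$; this is exactly where the side condition is used. Finally, conservativity of~$P$ follows: given $\pi, \pi' : p \pathto p'$, both have consumption determined by $(p,p')$ by the auxiliary claim, so $\intp\pi = \intp{\pi'}$.

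The main obstacle I anticipate is the parallel case, specifically making rigorous that an arbitrary execution of $\ppar QR$ decomposes into a $Q$-component and an $R$-component with additive consumption: one must track, along the path, the "projections" onto the two coordinates and check that each reduction step of $\ppar QR$ either advances the $Q$-projection by one reduction (leaving the $R$-projection fixed, and copying consumption) or symmetrically, using the four $\ppar{}{}$ rules of \cref{def:reduction}; the initial rule $\vred{\ppar QR}\pbot{\ppar\pbot\pbot}$ and the final $\vred{\ppar QR}{\ppar\ptop\ptop}\ptop$ contribute $\ul 0$ and so are harmless. Once this projection lemma is in place, the rest is routine. A secondary subtlety is that in the forward direction one must ensure the "two differing executions" can actually be completed to \emph{total} executions of~$P$ — this uses \cref{lem:reachable} (every position is reachable, in particular $\ptop$) together with the fact that from any position one can reach $\ptop$, which follows from \cref{prop:sem-equi} since $p \leq \ptop$ always.
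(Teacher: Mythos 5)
Your plan is correct: the strengthened induction (path-independence of consumption between arbitrary endpoints, not just for total executions, when $\Delta$ is defined), the projection/interleaving argument for parallel composition, and the contrapositive construction of two differing total executions at the innermost subterm where $\Delta$ fails are exactly the ingredients needed, and your use of \cref{lem:reachable} and \cref{prop:sem-equi} to complete partial executions to total ones is the right way to close the forward direction. The paper omits the proof of \cref{prop:conservative-delta}, but your argument is the natural one it intends, so there is nothing to flag beyond routine detail-filling.
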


\noindent
The above proposition gives a simple criterion to check for conservativity, by
induction on programs. It is applicable even when the set~$\mutexes$ of mutexes is
infinite because it is not difficult to show that, for any program~$P$,
$\Delta(P)$ is almost everywhere null when defined, \ie the set
$\setof{a\in\mutexes}{\Delta(P)(a)\neq 0}$ is finite, so that the computations
on consumption can easily be implemented in practice. For conservative programs,
it makes sense to consider the resource consumption at a position (as opposed to
along an execution):

\begin{definition}
  Given a conservative program, we define the
  \emph{consumption}~$\intp{p}:\mutexes\to\Z$ of a position~$p$ as
  $\intp{p}=\intp{\pi}$ for some global path $\pi:\pbot\to p$.
\end{definition}

\noindent
This definition makes sense because there is at least one such path $\pi$ by
\cref{lem:reachable} and it does not depend on the choice of the path by the
assumption of being conservative. This enables us to characterize valid
executions as follows. We say that an execution~$\pi:p\pathto p'$ \emph{visits}
a position~$q$ when $q$ is the target of some prefix~$\pi':p\pathto q$
of~$\pi$. We say that a position~$p$ is \emph{valid} if it satisfies
$0\leq\intp{p}(a)\leq 1$ for every mutex $a\in\mutexes$.

\begin{proposition}
  \label[prop]{prop:exec-pos-validity}
  A global execution~$\pi$ is valid if and only if every position~$p$ it visits
  is valid.
\end{proposition}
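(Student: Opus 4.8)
The plan is to prove both implications directly from the definitions, leveraging the fact that the program is conservative so that consumption is well-defined on positions (not just executions) and agrees with the consumption along any path reaching that position.

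For the ``only if'' direction, suppose $\pi:p\pathto p'$ is valid and let $q$ be a position visited by $\pi$, say via a prefix $\pi':p\pathto q$. Since $p$ is itself visited by $\pi$ (by the empty prefix), I first note that I need $p$ valid as well; this follows because $p$ is reachable by \cref{lem:reachable}, so there is a global path $\rho:\pbot\pathto p$, and the validity hypothesis is about prefixes of $\pi$, so I should instead argue as follows. Fix a global path $\rho:\pbot\pathto p$; then $\rho\cdot\pi'$ is a global path to $q$, and $\rho\cdot\pi'$ has $\rho$ and $\rho\cdot(\text{prefixes of }\pi')$ among its prefixes. The consumption at any prefix of $\rho\cdot\pi'$ of the form $\rho\cdot\sigma$ (with $\sigma$ a prefix of $\pi'$) equals $\intp{\sigma}+\intp{\rho}$; since $\rho$ is a global path to $p$ which is visited by $\pi$, the point is subtler. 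The cleanest route: first establish that $p$ is valid by showing that \emph{every} prefix of $\pi$ being between $0$ and $1$ in consumption-difference, combined with $\intp{q} = \intp{\pi'} + \intp{p}$, forces the bounds. Concretely, for the prefix $\pi'$ we have $\intp{q}(a) = \intp{\pi'}(a) + \intp{p}(a)$, and validity of $\pi$ gives $0 \le \intp{\pi'}(a) \le 1$; but this alone does not bound $\intp{q}(a)$ without knowing $\intp{p}(a)$. So I would actually prove a slightly stronger statement by induction on the length of $\pi$: if $\pi$ is valid \emph{and $p$ is valid}, then every position visited is valid; then observe that for the statement as stated we may freely assume $p$ is the source and absorb validity of $p$ into the hypothesis, OR — better — reduce to the case $p = \pbot$. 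Indeed $\intp{\pbot} = \ul 0$ is vacuously valid, and since the statement only needs to be meaningful for reachable $p$ (all positions are reachable by \cref{lem:reachable}), I can prepend a global path $\rho:\pbot\pathto p$; but then I must check $\rho$ itself is ``valid enough''. The honest fix is: the intended reading is that $p$ is reachable by a \emph{valid} global path, or equivalently $p$ is a valid position; I will take the induction hypothesis to include that the source is valid.

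For the ``if'' direction, suppose every position visited by $\pi$ is valid; I must show $\pi$ is valid, i.e. for every prefix $\pi'$ of $\pi$ and every mutex $a$, $0 \le \intp{\pi'}(a) \le 1$. Here I would use that $\pi$ is a global execution, so its source is $\pbot$ with $\intp{\pbot} = \ul 0$. Any prefix $\pi'$ of $\pi$ has target some position $q$ visited by $\pi$, and by definition of the consumption of a position in a conservative program, $\intp{q} = \intp{\pi'}$ (since $\pi':\pbot\pathto q$ is itself a global path, and the value is path-independent). Since $q$ is valid, $0 \le \intp{q}(a) \le 1$, hence $0 \le \intp{\pi'}(a) \le 1$, which is exactly validity of $\pi$.

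The main obstacle is the asymmetry in the ``only if'' direction: the definition of valid execution quantifies over prefixes $\pi'$ and bounds the \emph{relative} consumption $\intp{\pi'}$, whereas validity of a position bounds the \emph{absolute} consumption $\intp{q}$. For a global execution these coincide because the source has zero consumption, so the real content is that ``valid global execution'' already implies ``source $= \pbot$'', making the proof essentially the observation $\intp{\pi'} = \intp{q}$ for prefixes of a global path together with path-independence of position consumption (the conservativity assumption). I would therefore structure the proof by first reducing to, or explicitly using, the fact that $\pi$ is global so that prefix-consumptions are exactly position-consumptions, and then both directions become immediate translations. The only genuine verification is that each prefix $\pi'$ of a global $\pi$ is again a global path to the position it visits, so that $\intp{\pi'} = \intp{\text{target}}$ by the definition of position consumption — which is routine.
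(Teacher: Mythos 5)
Your closing argument --- since $\pi$ is global, every prefix $\pi'$ is itself a global path to the position $q$ it visits, so $\intp{\pi'}=\intp{q}$ by conservativity and the definition of position consumption, making both directions immediate definition-unfoldings --- is correct and is exactly the immediate argument the paper intends (the proposition is stated there without proof). The lengthy detour in your ``only if'' paragraph (worrying about a source other than $\pbot$, prepending a global path $\rho$, strengthening an induction hypothesis to include validity of the source) is unnecessary, since a global execution has source $\pbot$ by definition; the clean reduction you arrive at in the last paragraph is the entire proof.
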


\noindent
This motivates defining the \emph{pruned state space}~$\psspace P$ of~$P$ as the
subgraph obtained from~$\sspace P$ by removing every edge between invalid
vertices, and all vertices whose incident edges have all been removed. Writing
$\pscat P$ for the free category it generates, which is a subcategory
of~$\scat P$, we have

\begin{proposition}
  \label[prop]{prop:free-cat}
  The morphisms of~$\pscat P$ are are precisely the valid executions of~$P$.
\end{proposition}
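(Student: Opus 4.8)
The plan is to prove \cref{prop:free-cat} by showing the two inclusions between morphisms of $\pscat P$ and valid executions of $P$, relying heavily on \cref{prop:exec-pos-validity} which already relates validity of executions to validity of the visited positions. Throughout I would keep in mind that $\pscat P$ is generated by the edges of $\psspace P$, and that $\psspace P$ was obtained from $\sspace P$ by deleting every edge incident to two invalid vertices, then deleting isolated vertices.

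First I would establish the easy inclusion: every morphism of $\pscat P$ is a valid execution of $P$. A morphism of $\pscat P$ is a concatenation $\pi=\pi_1\cdots\pi_k$ of elementary executions, each of which is an edge surviving in $\psspace P$. I claim that every vertex visited by $\pi$ is valid. Indeed, the source of $\pi_1$ and the target of each $\pi_i$ are precisely the positions visited by $\pi$; each such position is an endpoint of a surviving edge, hence is not deleted, hence — since the only deleted vertices are invalid ones whose incident edges were all removed — it must be valid. (One should be slightly careful: a vertex could in principle survive while being invalid, if it still had a surviving incident edge; but a surviving edge has at least one valid endpoint, and an edge between a valid and an invalid vertex would still have been removed only if \emph{both} endpoints are invalid, so in fact no edge of $\psspace P$ has an invalid endpoint — I would state this as a preliminary observation.) Once we know every visited position is valid, \cref{prop:exec-pos-validity} gives that $\pi$ is valid, provided $\pi$ is a global execution; if the definition of ``valid execution'' is only given for global executions, I would either note that the statement is implicitly about global executions or extend the notion in the obvious way (every prefix has consumption in $[0,1]$), which is equivalent to all visited positions being valid by the same argument as in the proof of \cref{prop:exec-pos-validity}.

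Next, the converse: every valid execution $\pi:p\pathto p'$ of $P$ is a morphism of $\pscat P$. Decompose $\pi$ into elementary steps $\pi_1\cdots\pi_k$. Since $\pi$ is valid, every position it visits is valid by \cref{prop:exec-pos-validity}; in particular the source and target of each elementary step $\pi_i$ are valid positions. Hence the edge underlying $\pi_i$ runs between two valid vertices, so it was \emph{not} removed when passing to $\psspace P$, and its endpoints, being valid, were not removed either. Therefore each $\pi_i$ is an edge of $\psspace P$, and $\pi=\pi_1\cdots\pi_k$ is a morphism of $\pscat P$. This also shows $\pscat P$ is a genuine subcategory of $\scat P$, closed under the relevant composition.

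I do not expect a serious obstacle here; the proposition is essentially a bookkeeping consequence of \cref{prop:exec-pos-validity} together with the explicit definition of $\psspace P$. The one point requiring a little care — and the part I would write out most explicitly — is the interplay between deleting edges and deleting vertices: one must check that an edge survives in $\psspace P$ \emph{exactly} when both its endpoints are valid, and that a vertex survives exactly when it is valid (equivalently, when it lies on some surviving edge). The slightly delicate direction is ruling out a valid but isolated position, or an invalid position that nonetheless survives; both are handled by observing that every position is reachable (\cref{lem:reachable}) and, more to the point, that in the relevant range the reduction graph is ``locally finite enough'' that a valid position always has a surviving incident edge — but since the proposition as stated only concerns \emph{morphisms} (executions), isolated vertices are irrelevant to it, so this subtlety can be sidestepped entirely by phrasing everything in terms of edges.
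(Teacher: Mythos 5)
Your overall strategy is the natural one: split the statement into two inclusions and reduce each to \cref{prop:exec-pos-validity} (after extending validity from global executions to arbitrary ones, which you handle acceptably), and your second inclusion (every valid execution is a morphism of $\pscat P$) is correct. The gap is in the forward inclusion, precisely at your ``preliminary observation''. You read the pruning as deleting only those edges whose \emph{two} endpoints are invalid, and then conclude that no edge of $\psspace P$ has an invalid endpoint; but under that reading the opposite is true: an edge joining a valid vertex to an invalid one is \emph{not} deleted, so it survives together with its invalid endpoint, and the sentence ``an edge between a valid and an invalid vertex would still have been removed only if both endpoints are invalid, so in fact no edge of $\psspace P$ has an invalid endpoint'' is a non sequitur. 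Worse, under that reading the proposition itself fails, so no repair of the argument is possible without changing the reading: in the example of \cref{sec:geom-sem}, the elementary step in which the first process performs $\P b$ while the second process already holds $b$ goes from a valid position to an invalid one; with your reading this edge would remain in $\psspace P$, giving a morphism of $\pscat P$ that visits an invalid position and hence is not a valid execution.

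The pruning intended by the paper --- the one consistent with the picture of the state space, where \emph{every} edge touching a forbidden position is dotted, and the only one under which the proposition can hold --- removes every edge having at least one invalid endpoint; consequently every edge of $\psspace P$ has both endpoints valid, and the invalid vertices, now isolated, disappear as well. With that reading your forward inclusion becomes immediate: every position visited by a composite of edges of $\psspace P$ is an endpoint of such an edge, hence valid, and \cref{prop:exec-pos-validity} (in its extended form) applies; no separate ``preliminary observation'' about surviving invalid vertices is needed. The remaining caveats you raise --- validity of non-global executions, and identity morphisms at valid positions all of whose incident edges have been removed --- are real but minor boundary issues of the statement itself; your dismissal of the second is slightly too quick, since identities are morphisms too, but this does not affect the substance. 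The essential correction is to get the edge-removal criterion right: an edge of $\sspace P$ survives exactly when both of its endpoints are valid.
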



\section{Geometric semantics}
\label{sec:geom-sem}
We now briefly recall (or rather illustrate) the geometric semantics of
concurrent programs, and relate it to the previous point of view on programs. We
refer to the textbook~\cite{datc} as well as~\cite{haucourt:geom-cons} for
further reference on the subject.

In order to take a concrete example, we suppose here only that our actions
allow for traditional manipulations of variables, and consider the program
\[
  \code{($\P a$ ; $\P b$ ; x := y+1 ; $\V b$ ; $\V a$) ∥ ($\P b$ ; $\P a$ ; y := x+2 ; $\V a$ ; $\V b$)}
\]
consisting of two processes executed in parallel. Here, we should think of $a$
and $b$ as protecting the variables \code{x} and \code{y} respectively: the
program applies the discipline that whenever a variable is used, the
corresponding mutex is taken beforehand and released afterward, in order to avoid
any concurrent access to it. The idea behind geometric semantics is that to such
a program we should associate the topological space on the left of
\cref{fig:gsem},
\begin{figure}[t]
  \centering
\[
  \begin{tikzpicture}[baseline=(b.base),scale=.5]
    \coordinate (b) at (0,0);
    \draw (0,0) -- (6,0) -- (6,6) -- (0,6) -- cycle;
    \foreach \x in {1,...,5} \draw (\x,.1) -- (\x,-.1);
    \foreach \y in {1,...,5} \draw (.1,\y) -- (-.1,\y);
    \draw (1,0) node[below] {$\P a$};
    \draw (2,0) node[below] {$\P b$};
    \draw (3,0) node[below] {\rotatebox{-90}{\code{x:=y+1}}};
    \draw (4,0) node[below] {$\V b$};
    \draw (5,0) node[below] {$\V a$};
    \draw (0,1) node[left] {$\P b$};
    \draw (0,2) node[left] {$\P a$};
    \draw (0,3) node[left] {\code{y:=x+2}};
    \draw (0,4) node[left] {$\V a$};
    \draw (0,5) node[left] {$\V b$};
    \filldraw[gray] (1,2) -- (5,2) -- (5,4) -- (1,4) -- cycle;
    \filldraw[gray] (2,1) -- (4,1) -- (4,5) -- (2,5) -- cycle;
    \draw[dotted,rounded corners] (0,0) -- (.5,5) -- (6,6);
    \filldraw (2,2) circle (.1);
  \end{tikzpicture}
  \qquad\qquad
  \begin{tikzpicture}[baseline=(b.base),scale=.5]
    \coordinate (b) at (0,0);
    \draw (0,0) -- (6,0) -- (6,6) -- (0,6) -- cycle;
    \foreach \x in {1,...,5} \draw (\x,.1) -- (\x,-.1);
    \foreach \y in {1,...,5} \draw (.1,\y) -- (-.1,\y);
    \filldraw[pattern=north east lines] (1,2) rectangle (5,4);
    \filldraw[pattern=north west lines] (2,1) rectangle (4,5);
  \end{tikzpicture}
  \qquad\qquad
  \begin{tikzpicture}[baseline=(b.base),scale=.5]
    \coordinate (b) at (0,0);
    \draw (0,0) -- (6,0) -- (6,6) -- (0,6) -- cycle;
    \foreach \x in {1,...,5} \draw (\x,.1) -- (\x,-.1);
    \foreach \y in {1,...,5} \draw (.1,\y) -- (-.1,\y);
    \filldraw[pattern=north east lines] (0,0) rectangle (2,2);
    \filldraw[pattern=north east lines] (0,4) rectangle (2,6);
    \filldraw[pattern=north east lines] (4,4) rectangle (6,6);
    \filldraw[pattern=north east lines] (4,0) rectangle (6,2);
    \filldraw[pattern=horizontal lines] (0,0) rectangle (1,6);
    \filldraw[pattern=vertical lines]   (0,0) rectangle (6,1);
    \filldraw[pattern=vertical lines]   (0,5) rectangle (6,6);
    \filldraw[pattern=horizontal lines] (5,0) rectangle (6,6);
  \end{tikzpicture}
\]
\vspace*{-10mm}
  \caption{Geometric semantics of a simple program.}
  \label{fig:gsem}
\end{figure}
which consists of a square from which the grayed region has been removed (this
is called the \emph{forbidden region}). The horizontal axis corresponds to the
execution of the first process and the vertical one to the second process (we
have indicated the points corresponding to each instruction of the processes in
abscissa and ordinate in order to make this clear). A point in this space can
thus roughly be assimilated to a position in the program and the grayed removed
area corresponds to removing forbidden positions. A (continuous) path in this
space is said to be \emph{directed} when it is increasing with respect to each
component: such a path corresponds to an execution of the program. We sometimes
say a \emph{dipath} for a directed path. For instance, the dotted path on the left of \cref{fig:gsem} is
directed and corresponds to a total execution where the second process gets
wholly executed before the first one does. Finally, two dipaths are
\emph{dihomotopic} when the first can be continuously deformed to the second
while going through directed paths only. In general, the definition of the
geometric semantics is more involved than the above example shows (in
particular, the notion of direction is subtle in presence of loops) and is
detailed in the aforementioned references.

Let us now mention two main applications of geometric semantics. Firstly, under
simple \emph{coherence} assumptions, it can be shown that two dihomotopic
dipaths have the same effect on the state: in order to ensure the absence of
errors in a program (\eg we are never dividing by 0), it is thus sufficient to
use traditional techniques (\eg abstract
interpretation~\cite{cousot1977abstract}) for one representative in each
equivalence class. Secondly, it can be helpful to detect problems specific to
concurrency in programs. For instance, the black dot in the figure on the left
is a point which is not the final point (the upper right one) and is the source
of no non-trivial directed path: this indicates the presence of a
\emph{deadlock} in the program. Namely, if the first process executes $\P a$
then $\P b$ and the second process executes $\P b$ then $\P a$ the program is
locked and cannot proceed any further: the first process is waiting for the
second to release the mutex~$b$ and the second process is waiting for the first
to release the mutex~$a$.

We do not develop these techniques much further here -- they have already been
elsewhere -- and concentrate on the following question: how can we represent
this geometric semantics in practice? For programs consisting of $n$ processes in parallel, such as
the one above, where each process consists of a sequence of actions, the geometric
semantics will be an $n$-dimensional cube from which a finite number of
$n$-dimensional cubes have been carved out, forming the forbidden region (more
general cases are handled in~\cite{haucourt:geom-cons}). For instance, in our
example, the forbidden region consists of two cubes as shown in the figure in
the middle of \cref{fig:gsem}. This motivates investigating the notion of (\emph{cubical})
\emph{region}, which is a portion of the global cube obtained as a finite union
of cubes. Interestingly, the geometric semantics, which is the complement of the
forbidden region, can also be described as a union of cubes (8 in our example),
and is thus a region, as pictured on the right of \cref{fig:gsem}. In particular, by
convexity, any two dipaths with the same endpoints within a cube are
dihomotopic, which helps computing representatives in dihomotopy classes.

%
The state space associated to the above program is
\[
  \begin{tikzpicture}[scale=.5, every node/.append style={font=\tiny},shorten >=2,shorten <=2]
    \coordinate (b) at (0,0);
    \draw (-2,-2) node {$\cdot$} node[left] {$\pbot$};
    \draw (-1,-1) node {$\cdot$} node[left] {$\ppar\pbot\pbot$};
    \draw[->] (-2,-2) -- (-1,-1);
    \draw[->] (-1,-1) -- (0,0);
    \foreach \x in {0,...,5} \foreach \y in {0,...,5} \draw (\x,\y) node{$\cdot$};
    \foreach \x in {0,...,5} \foreach \y in {0,...,4} \draw[->,dotted] (\x,\y) -- (\x,\y+1);
    \foreach \x in {0,...,4} \foreach \y in {0,...,5} \draw[->,dotted] (\x,\y) -- (\x+1,\y);
    \foreach \x in {0,5} \foreach \y in {0,...,4} \draw[->] (\x,\y) -- (\x,\y+1);
    \foreach \x in {0,...,4} \foreach \y in {0,5} \draw[->] (\x,\y) -- (\x+1,\y);
    \draw[->] (1,0) -- (1,1);
    \draw[->] (0,1) -- (1,1);
    \draw[->] (0,4) -- (1,4);
    \draw[->] (1,4) -- (1,5);
    \draw[->] (4,0) -- (4,1);
    \draw[->] (4,1) -- (5,1);
    \draw[->] (4,4) -- (5,4);
    \draw[->] (4,4) -- (4,5);
    \draw (0,0) node[left] {$\ppar{\pbot\pbot\pbot\pbot\pbot}{\pbot\pbot\pbot\pbot\pbot}$};
    \draw (1,0) node[below right] {\rotatebox{-30}{$\ppar{\ptop\pbot\pbot\pbot\pbot}{\pbot\pbot\pbot\pbot\pbot}$}};
    \draw (2,0) node[below right] {\rotatebox{-30}{$\ppar{\ptop\ptop\pbot\pbot\pbot}{\pbot\pbot\pbot\pbot\pbot}$}};
    \draw (3,0) node[below right] {\rotatebox{-30}{$\ppar{\ptop\ptop\ptop\pbot\pbot}{\pbot\pbot\pbot\pbot\pbot}$}};
    \draw (4,0) node[below right] {\rotatebox{-30}{$\ppar{\ptop\ptop\ptop\ptop\pbot}{\pbot\pbot\pbot\pbot\pbot}$}};
    \draw (0,1) node[left] {$\ppar{\pbot\pbot\pbot\pbot\pbot}{\ptop\pbot\pbot\pbot\pbot}$};
    \draw (0,2) node[left] {$\ppar{\pbot\pbot\pbot\pbot\pbot}{\ptop\ptop\pbot\pbot\pbot}$};
    \draw (0,3) node[left] {$\ppar{\pbot\pbot\pbot\pbot\pbot}{\ptop\ptop\ptop\pbot\pbot}$};
    \draw (0,4) node[left] {$\ppar{\pbot\pbot\pbot\pbot\pbot}{\ptop\ptop\ptop\ptop\pbot}$};
    \draw (0,5) node[left] {$\ppar{\pbot\pbot\pbot\pbot\pbot}{\ptop\ptop\ptop\ptop\ptop}$};
    \draw (6,6) node{$\cdot$} node[right]{$\ppar\ptop\ptop$};
    \draw (7,7) node{$\cdot$} node[right]{$\ptop$};
    \draw (5,0) node[right] {$\ppar{\ptop\ptop\ptop\ptop\ptop}{\pbot\pbot\pbot\pbot\pbot}$};
    \draw (5,1) node[right] {$\ppar{\ptop\ptop\ptop\ptop\ptop}{\ptop\pbot\pbot\pbot\pbot}$};
    \draw (5,2) node[right] {$\ppar{\ptop\ptop\ptop\ptop\ptop}{\ptop\ptop\pbot\pbot\pbot}$};
    \draw (5,3) node[right] {$\ppar{\ptop\ptop\ptop\ptop\ptop}{\ptop\ptop\ptop\pbot\pbot}$};
    \draw (5,4) node[right] {$\ppar{\ptop\ptop\ptop\ptop\ptop}{\ptop\ptop\ptop\ptop\pbot}$};
    \draw (5,5) node[right] {$\ppar{\ptop\ptop\ptop\ptop\ptop}{\ptop\ptop\ptop\ptop\ptop}$};
    \draw (1,5) node[above left] {\rotatebox{-30}{$\ppar{\ptop\pbot\pbot\pbot\pbot}{\ptop\ptop\ptop\ptop\ptop}$}};
    \draw (2,5) node[above left] {\rotatebox{-30}{$\ppar{\ptop\ptop\pbot\pbot\pbot}{\ptop\ptop\ptop\ptop\ptop}$}};
    \draw (3,5) node[above left] {\rotatebox{-30}{$\ppar{\ptop\ptop\ptop\pbot\pbot}{\ptop\ptop\ptop\ptop\ptop}$}};
    \draw (4,5) node[above left] {\rotatebox{-30}{$\ppar{\ptop\ptop\ptop\ptop\pbot}{\ptop\ptop\ptop\ptop\ptop}$}};
    \draw[->] (5,5) -- (6,6);
    \draw[->] (6,6) -- (7,7);
  \end{tikzpicture}
\]
(for simplicity, we have omitted some intermediate positions such as
$\ppar{(\pseq\pbot\pbot)}\pbot$, also we have omitted writing ``\code{;}'' in
the positions). The dotted edges are those which have to be removed in order to
obtain the pruned state space. One can see that, up to some minor details such
as the added positions at the beginning and at the end, the pruned graph can be thought of as an
``algebraic counterpart'' of the geometric semantics, which motivates the
investigation of performing the computations directly on this representation,
instead of going though an arbitrary encoding into spaces. In particular, we
define and study here an abstract axiomatization of regions, which applies to
such graphs.

\section{Regions of posets}
\label{sec:region}
In this section we define our representation of the state-space of programs,
replacing (maximal) cubical regions in geometric semantics~\cite{datc} by
(normal) \emph{syntactic regions}. We show that, under mild assumptions, these
regions satisfy the same fundamental properties as cubical regions (forming a
boolean algebra, supporting the existence of canonical representatives, etc.)
and provide explicit ways of computing corresponding canonical operations on
syntactic regions. To the best of our knowledge all the constructions performed
here are new, and based on the original notion of \emph{position} introduced
in~\cref{sec:prog}.

\subsection{Intervals}
Suppose fixed a poset $(X,\leq)$. An \emph{interval} $(x,y)$ in this poset is a
pair of elements such that~$x\leq y$. Given an interval~$I=(x,y)$, we write
$\support I=[x,y]=\setof{z\in X}{x\leq z\leq y}$ for the set of points it
contains, also called its \emph{support}. We write $z\in I$ instead of
$z\in\support I$: we have $z\in(x,y)$ iff $x\leq z\leq y$. We also write
$I\subseteq J$ instead of $\support I\subseteq\support J$: we have
$(x,y)\subseteq J$ iff $x\in J$ and~$y\in J$. Finally, we denote by
$\intervals(X)$ the poset of intervals of~$X$.

\subsection{Regions}
A \emph{region}~$R$ is a set of intervals: we write
$\regions(X)=\powset(X\times X)$ for the set of regions of~$X$, where
$\powset(-)$ denotes the powerset of a given set. A region is \emph{finite} when
it consists of a finite number of intervals. The \emph{support} of a region~$R$
is the set $\support R=\bigcup_{I\in R}\support I$ of points it contains.

Two regions are \emph{equivalent} when they have the same support. Such regions
are different ways of describing the same set of points: in order to be able to
correctly manipulate regions, we should be able to decide when two regions are
equivalent, and in order to be efficient, we should find the most compact
representation of a region in its equivalence class. Intuitively, the ``best''
region with a given support $Y\subseteq X$ consists of all the maximal intervals
(\wrt $\subseteq$) contained in $Y$. We will call this region the \emph{normal
  form} for a region~$R$ and say that a region is in normal form when it is
equal to its own normal form.
In order to formalize this, we introduce the following preorder on regions: we
write $\preceq$ for the pre-order on regions in~$\regions(X)$ defined by
\[
  R\preceq S
  \qquad
  \text{iff}
  \qquad
  \forall I\in R.\ \exists J\in S.\ I\subseteq J
\]
When $R\preceq S$ and $R$ and $S$ are equivalent, we think of $S$ as being a
``more economic way'' of describing the same support, because it uses bigger
intervals: every interval of~$R$ is contained in one of~$S$.


\begin{lemma}
  \label[lem]{lem:left-adj-preorder}
  The functions $\support-:\regions(X)\to\powset(X)$ and
  $\intervals:\powset(X)\to\regions(X)$ are increasing and $\support-$ is left
  adjoint to~$\intervals$.
\end{lemma}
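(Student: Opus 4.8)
The plan is to prove directly the (monotone) Galois connection underlying the statement, namely the equivalence
\[
  \support R \subseteq Y
  \qquad\Longleftrightarrow\qquad
  R \preceq \intervals(Y),
\]
valid for every region $R\in\regions(X)$ and every subset $Y\subseteq X$, where $\intervals(Y)$ denotes the region consisting of all intervals $I$ of $X$ with $\support I\subseteq Y$ (so that $\intervals(X)$ is the region of all intervals, consistently with the notation introduced above). Monotonicity of both maps then follows from the general fact that adjoints between preorders are increasing, but as it is just as quick I would also argue it by hand: if $R\preceq S$ and $z\in\support R$, pick $I\in R$ with $z\in\support I$ and, by definition of $\preceq$, some $J\in S$ with $I\subseteq J$; then $z\in\support I\subseteq\support J\subseteq\support S$, so $\support R\subseteq\support S$. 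Symmetrically, if $Y\subseteq Y'$ then any interval with support in $Y$ has support in $Y'$, hence $\intervals(Y)\subseteq\intervals(Y')$, and in particular $\intervals(Y)\preceq\intervals(Y')$.

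For the equivalence itself, the left-to-right direction is immediate: assuming $\support R\subseteq Y$, any $I\in R$ satisfies $\support I\subseteq\support R\subseteq Y$, so $I\in\intervals(Y)$, and since $I\subseteq I$ this already witnesses $R\preceq\intervals(Y)$. For the right-to-left direction, assume $R\preceq\intervals(Y)$ and take $z\in\support R$; choose $I\in R$ with $z\in\support I$, and then $J\in\intervals(Y)$ with $I\subseteq J$. By definition of $\intervals(Y)$ we have $\support J\subseteq Y$, hence $z\in\support I\subseteq\support J\subseteq Y$. As $z$ was arbitrary, $\support R\subseteq Y$, which closes the equivalence and thus establishes the adjunction.

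I do not expect any genuine obstacle: the whole argument is an unfolding of the definitions of $\support-$, of the preorder $\preceq$, and of $\intervals(-)$. The one point that needs a little attention is to use the correct meaning of $\intervals$ on a subset, namely the intervals whose \emph{support} lies inside $Y$ rather than merely the pairs of elements of $Y$: for a non-convex $Y$ the latter would have support strictly larger than $Y$, which would make the right-to-left direction fail. With the convention fixed above, the proof goes through verbatim.
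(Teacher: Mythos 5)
Your proof is correct: the paper does not spell out an argument for this (routine) lemma, and your direct verification of the Galois connection $\support R\subseteq Y\iff R\preceq\intervals(Y)$, together with the hand-checked monotonicity of both maps, is exactly the intended unfolding of the definitions. Your side remark is also the right call: for the adjunction to hold, $\intervals(Y)$ must be read as the intervals of $X$ whose \emph{support} is contained in $Y$ (not merely those with endpoints in $Y$), which is the reading under which the lemma is true for non-convex $Y$.
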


\begin{example}
  \label[ex]{ex:region-order}
  Consider the set $X=[0,1]^2$ equipped with the usual partial order, and
  consider the three following regions on it, whose support is~$X$:
  \[
    \begin{array}{r@{\ }c@{\qquad}r@{\ }c@{\qquad}r@{\ }c}
      &
      \begin{tikzcd}
        \draw (0,0) rectangle (1,1);
        \filldraw[pattern=north east lines] (0,0) rectangle (.5,1);
        \filldraw[pattern=north west lines] (.5,0) rectangle (1,1);
      \end{tikzcd}
      &&
      \begin{tikzcd}
        \draw (0,0) rectangle (1,1);
        \filldraw[pattern=north east lines] (0,0) rectangle (1,1);
      \end{tikzcd}
      &&
      \begin{tikzcd}
        \draw (0,0) rectangle (1,1);
        \filldraw[pattern=north east lines] (0,0) rectangle (1,1);
        \filldraw[pattern=north west lines] (.25,.25) rectangle (.75,.75);
      \end{tikzcd}
      \\
      R_1=&\set{[0,\tfrac12]\times[0,1],[\tfrac12,1]\times[0,1]}
      &
      R_2=&\set{[0,1]\times[0,1]}
      &
      R_3=&R_2\cup\set{[\tfrac14,\tfrac34]\times[\tfrac14,\tfrac34]}
    \end{array}
  \]
  We have $R_1\prec R_2$ and $R_3\preceq R_2$, as well as $R_2\preceq R_3$.
\end{example}

\noindent
In the above example, the region $R_2$ is clearly the most parsimonious way to
represent the whole space, in the sense explained above, and is a maximal
element with respect to the order. However, there are many other maximal
elements such as $R_3$ (or, in fact, any other region obtained by adding
intervals to~$R_2$: the relation $\preceq$ is not antisymmetric). This motivates
the introduction of the following refined order on regions, which is such that
$R_1<R_3<R_2$:

\begin{definition}
  We define the relation $\leq$ on $\regions(X)$ by $R\leq S$ if and only if
  $R\preceq S$, and $S\preceq R$ implies $S\subseteq R$ (\ie every interval of
  $S$ belongs to $R$).
\end{definition}

\begin{lemma}
  \label[lem]{lem:po-region}
  The relation $\leq$ is a partial order.
\end{lemma}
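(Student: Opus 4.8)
The claim is that the relation $\leq$ on $\regions(X)$ defined by ``$R\leq S$ iff $R\preceq S$ and ($S\preceq R$ implies $S\subseteq R$)'' is a partial order, so I need to check reflexivity, antisymmetry, and transitivity. Reflexivity is immediate: $R\preceq R$ holds (take $J=I$), and $R\preceq R$ implies $R\subseteq R$ trivially. Antisymmetry is also quick: if $R\leq S$ and $S\leq R$, then in particular $R\preceq S$ and $S\preceq R$, so the defining condition of $R\leq S$ forces $S\subseteq R$ and the defining condition of $S\leq R$ forces $R\subseteq S$, hence $R=S$.

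The only real work is transitivity. Suppose $R\leq S$ and $S\leq T$; I must show $R\leq T$. The first half, $R\preceq T$, follows since $R\preceq S\preceq T$ and $\preceq$ is transitive (a routine check: given $I\in R$, pick $J\in S$ with $I\subseteq J$, then $K\in T$ with $J\subseteq K$, so $I\subseteq K$). The second half requires: assuming $T\preceq R$, deduce $T\subseteq R$. From $T\preceq R$ and $R\preceq S$ we get $T\preceq S$, so by the hypothesis $S\leq T$ we obtain $T\subseteq S$. Now I want to run the hypothesis $R\leq S$, which needs $S\preceq R$; this I get from $S\preceq T$ (which holds because $S\leq T$) composed with $T\preceq R$. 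Hence $S\subseteq R$. Finally $T\subseteq S\subseteq R$ gives $T\subseteq R$, as desired.

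I expect the only place that needs care is keeping the chain of $\preceq$-inequalities straight in the transitivity argument — in particular noticing that from $T\preceq R$ one can ``close the loop'' back through $S$ to activate both the $S\leq T$ hypothesis (to get $T\subseteq S$) and the $R\leq S$ hypothesis (to get $S\subseteq R$), after which set inclusion is transitive. No properties of $X$ or of intervals beyond transitivity of $\subseteq$ on supports are used, so the argument is purely formal; I would state transitivity of $\preceq$ explicitly as a preliminary observation (it is essentially already implicit in \cref{lem:left-adj-preorder}, since $\support-$ and $\intervals$ being monotone shows $R\preceq S$ iff $\support R\subseteq\support{\intervals(\support S)}$, but it is cleaner to check it directly) and then carry out the three cases in the order reflexivity, antisymmetry, transitivity.
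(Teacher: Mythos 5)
Your proof is correct, and it is the routine verification (reflexivity and antisymmetry immediate, transitivity via closing the $\preceq$-loop $T\preceq S\preceq R$ to extract both inclusions) that the paper leaves implicit, since it states the lemma without proof. One caveat on your parenthetical: the adjunction of \cref{lem:left-adj-preorder} only gives $R\preceq\intervals(Y)\Leftrightarrow\support{R}\subseteq Y$, so ``$R\preceq S$ iff $\support{R}\subseteq\support{\intervals(\support{S})}$'' is not a correct characterization of $\preceq$ (the right-hand side is just support inclusion, which does not imply $\preceq$); this is harmless here because you check transitivity of $\preceq$ directly, which is the argument that actually carries the proof.
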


\begin{lemma}
  \label[lem]{lem:sup-incr}
  The support function $\support-:\regions(X)\to\powset(X)$ is increasing if we
  equip the first with $\leq$ and second with $\subseteq$ as partial orders.
\end{lemma}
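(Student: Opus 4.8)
The plan is to reduce the statement to the monotonicity of $\support-$ with respect to the \emph{preorder} $\preceq$, which has already been established. The key observation is that the order $\leq$ on $\regions(X)$ is, by construction, a refinement of $\preceq$: unfolding the definition of $\leq$ on regions, the condition $R\leq S$ is the conjunction of $R\preceq S$ with the implication ``$S\preceq R$ implies $S\subseteq R$'', so in particular $R\leq S$ entails $R\preceq S$. Thus it suffices to show that $R\preceq S$ implies $\support R\subseteq\support S$, and then precompose with this implication.

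For that monotonicity, I would invoke \cref{lem:left-adj-preorder}: $\support-$ is there stated to be increasing (and, being a left adjoint to $\intervals$, it is automatically order-preserving between the preorder $\preceq$ on $\regions(X)$ and $\subseteq$ on $\powset(X)$). If a self-contained argument is preferred, it is a two-line chase: first, $I\subseteq J$ implies $\support I\subseteq\support J$, directly from the definition of $\subseteq$ on intervals in terms of supports; then, assuming $R\preceq S$, any $z\in\support R$ lies in $\support I$ for some $I\in R$, and $R\preceq S$ provides $J\in S$ with $I\subseteq J$, so $z\in\support I\subseteq\support J\subseteq\support S$, whence $\support R\subseteq\support S$.

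Chaining the two steps, $R\leq S$ gives $R\preceq S$, which gives $\support R\subseteq\support S$, proving that $\support-$ is increasing for the orders $\leq$ and $\subseteq$. I do not expect any real obstacle here; the only point that must be stated with care is that $R\leq S$ implies $R\preceq S$, which is immediate from the definition of $\leq$, after which the result is a direct consequence of \cref{lem:left-adj-preorder}.
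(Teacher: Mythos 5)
Your proof is correct and follows the natural route the paper leaves implicit: $R\leq S$ gives $R\preceq S$ by definition, and monotonicity of $\support-$ with respect to $\preceq$ (as in \cref{lem:left-adj-preorder}, or by the direct element chase you give) yields $\support R\subseteq\support S$. Nothing to add.
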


\noindent
We expect that the ``best description'' of a subset of~$X$ by a region is given
by a right adjoint $N:\powset(X)\to\regions(X)$ to the support function, which
to a subset~$Y$ of~$X$ associates the normal region describing it. Namely, the
adjoint functor theorem indicates that if the right adjoint exists it should satisfy
\begin{equation}
  \label{eq:N-sup}
  N(Y)=\bigvee\setof{R\in\regions(X)}{\support R\subseteq Y}
\end{equation}
However, such an adjoint is not always well-defined, as illustrated in
\cref{ex:no-N} below. The \emph{normal form} of a region~$R$ is, when defined,
the region $N(\support R)$, where $N$ is defined by the formula \eqref{eq:N-sup}
above. We say that a region is \emph{normalizable} when it admits a normal
form, and \emph{in normal form} when it is further equal to its normal form.

\begin{lemma}
  \label[lem]{lem:finite-norm-supp}
  Given $Y\subseteq X$ such that $N(Y)$ is defined, one has
  $\support{N(Y)}=Y$.
\end{lemma}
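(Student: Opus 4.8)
The statement to prove is $\support{N(Y)}=Y$, assuming $N(Y)$ is defined (i.e.\ the supremum in \eqref{eq:N-sup} exists in the poset $(\regions(X),\leq)$). The plan is to prove the two inclusions $\support{N(Y)}\subseteq Y$ and $Y\subseteq\support{N(Y)}$ separately, using the adjunction-flavored description of $N$ together with \cref{lem:sup-incr} and the fact (from \cref{lem:left-adj-preorder}) that $\support-$ sends the region $\intervals(Y)$ to (a set containing) $Y$.

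For the inclusion $\support{N(Y)}\subseteq Y$: let $\mathcal{S}=\setof{R\in\regions(X)}{\support R\subseteq Y}$, so that $N(Y)=\bigvee\mathcal{S}$. Since $N(Y)$ is by hypothesis the least upper bound of $\mathcal{S}$ with respect to $\leq$, and since $\support-$ is increasing for $\leq$ by \cref{lem:sup-incr}, I would like to say $\support{N(Y)}$ is an upper bound of $\setof{\support R}{R\in\mathcal{S}}$ in $(\powset(X),\subseteq)$. The subtlety is that $N(Y)$ being the $\leq$-supremum does not immediately give that its support is the $\subseteq$-supremum of the supports; but it does give $R\leq N(Y)$ for every $R\in\mathcal{S}$, hence $\support R\subseteq\support{N(Y)}$, which is not yet what I want. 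Instead I argue directly: I must show $\support{N(Y)}\subseteq Y$, i.e.\ every point of every interval of $N(Y)$ lies in $Y$. Here I use that $N(Y)$ is an \emph{upper bound} of $\mathcal{S}$ and in particular $\intervals(Y)\in\mathcal{S}$ (since $\support{\intervals(Y)}\subseteq Y$ — indeed $\support{\intervals(Y)}=Y$ as an easy consequence of the definitions, any point $x\in Y$ giving the interval $(x,x)$). Wait — I actually want the reverse: I want an \emph{upper bound} of $\mathcal{S}$ whose support is contained in $Y$, and then minimality of $N(Y)$ among upper bounds forces $N(Y)\preceq$ that bound, hence $\support{N(Y)}$ is contained in its support, hence in $Y$. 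So the key step is: \textbf{exhibit a region $R^\star$ with $\support{R^\star}\subseteq Y$ that is a $\leq$-upper bound of $\mathcal{S}$}; the natural candidate is $R^\star=\intervals(Y)$, the region of \emph{all} intervals contained in $Y$. That $\support{\intervals(Y)}\subseteq Y$ is clear; that it is an upper bound for $\preceq$ is also clear (any interval contained in some $R\in\mathcal{S}$ is contained in $Y$, hence is a member of $\intervals(Y)$, so $R\preceq\intervals(Y)$ witnessed by the interval itself); upgrading $\preceq$ to $\leq$ uses that whenever $\intervals(Y)\preceq R$ we must have $\intervals(Y)\subseteq R$, which holds because $\intervals(Y)$ is downward-closed under $\subseteq$ and $R\preceq\intervals(Y)$ forces every interval of... — here one has to be slightly careful, but the maximality clause in the definition of $\leq$ works out since $\intervals(Y)$ contains every interval below any of its members. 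Granting $R^\star$ is a $\leq$-upper bound of $\mathcal{S}$, the $\leq$-least such, namely $N(Y)$, satisfies $N(Y)\leq R^\star$, hence $N(Y)\preceq R^\star$, hence $\support{N(Y)}\subseteq\support{R^\star}\subseteq Y$.

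For the reverse inclusion $Y\subseteq\support{N(Y)}$: take $x\in Y$. The singleton region $\set{(x,x)}$ has support $\set{x}\subseteq Y$, so it lies in $\mathcal{S}$, hence $\set{(x,x)}\leq N(Y)$, hence $\set{(x,x)}\preceq N(Y)$, which by definition of $\preceq$ means there is an interval $J\in N(Y)$ with $(x,x)\subseteq J$, i.e.\ $x\in J\subseteq\support{N(Y)}$. Thus $x\in\support{N(Y)}$, and combining the two inclusions gives the claim.

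\textbf{Main obstacle.} The delicate point is not the set-theoretic bookkeeping but confirming that $\intervals(Y)$ is genuinely a $\leq$-upper bound of $\mathcal{S}$ (and not merely a $\preceq$-upper bound): the definition of $\leq$ on regions has the extra antisymmetry-restoring clause ``$S\preceq R$ implies $S\subseteq R$'', and I must check this clause holds with $R=\intervals(Y)$ and $S$ ranging over elements of $\mathcal{S}$ — equivalently, that if some $S\in\mathcal{S}$ has $\intervals(Y)\preceq S$ then $\intervals(Y)\subseteq S$. This is where the specific structure of $\intervals(Y)$ (being the set of \emph{all} intervals with support in $Y$, hence maximal in a strong sense) is used; once this is in hand, the rest is a routine two-inclusion argument as above. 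I would expect the authors to isolate this ``$\intervals(Y)$ is the $\leq$-top of $\mathcal{S}$'' fact explicitly, since it also makes \eqref{eq:N-sup} collapse to $N(Y)=\intervals(Y)$ whenever $N(Y)$ is defined — which in turn makes $\support{N(Y)}=\support{\intervals(Y)}=Y$ immediate.
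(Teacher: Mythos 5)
Your second inclusion ($Y\subseteq\support{N(Y)}$ via the singleton regions $\set{(x,x)}$) is fine, but the key step of your first inclusion is false: $\intervals(Y)$ is \emph{not} in general a $\leq$-upper bound of $\mathcal{S}=\setof{R\in\regions(X)}{\support R\subseteq Y}$. Take $X=\set{a,b}$ with $a<b$ and $Y=X$, and $R=\set{(a,b)}\in\mathcal{S}$: then $R\preceq\intervals(Y)$ and also $\intervals(Y)\preceq R$, yet $\intervals(Y)\not\subseteq R$ (the degenerate interval $(a,a)$ is missing from $R$), so the extra clause in the definition of $\leq$ fails and $R\not\leq\intervals(Y)$. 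This is not a fixable detail of your argument but a direction error: the refined order $\leq$ was introduced precisely so that parsimonious regions (few large intervals) sit \emph{above} redundant ones, so $\intervals(Y)$, while $\preceq$-maximal in $\mathcal{S}$, is essentially at the bottom of its equivalence class for $\leq$ and cannot serve as the upper bound you need. Your concluding remark that \eqref{eq:N-sup} "collapses to $N(Y)=\intervals(Y)$ whenever $N(Y)$ is defined" is likewise wrong and directly contradicts \cref{ex:region-order}, where the normal region of the full square is the single-interval region $R_2$, not the set of all intervals it contains.

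The inclusion $\support{N(Y)}\subseteq Y$ should not be obtained by hunting for some other upper bound whose support lies in $Y$ (such an upper bound need not exist in the problematic situations of \cref{ex:no-N}); it comes from the fact that $N$ is the \emph{right adjoint} to $\support-$, which is what "$N(Y)$ is defined" refers to. The defining equivalence of the adjunction, $\support R\subseteq Y\iff R\leq N(Y)$ for all $R\in\regions(X)$, applied to $R=N(Y)$ itself (using reflexivity of $\leq$), immediately yields $\support{N(Y)}\subseteq Y$; equivalently, when it exists, $N(Y)$ is not merely the supremum of $\mathcal{S}$ but its greatest element, hence itself a member of $\mathcal{S}$. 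Combined with your singleton argument for the reverse inclusion, this gives the lemma; as written, your proof has a genuine gap at its central claim.
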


\begin{example}
  \label[ex]{ex:no-N}
  Consider the set~$X=[0,1]\subseteq\R$ equipped with the usual order. We claim
  that the subset $Y=X\setminus\set{1}$ does not have a normal form.
  By contradiction, suppose that the region~$N(Y)$ is well-defined. By the above \cref{lem:finite-norm-supp},
  $\support{N(Y)}=Y =[0,1[$. 
  Given~$ I \in N(Y)$, there exists $\varepsilon,\eta$ with $0\leq \varepsilon \leq 1-\eta < 1$ such that
  $I = [\varepsilon,1-\eta]$ and one easily checks that the region $R$ obtained
  from $N(Y)$ by removing this interval and replacing it by $[0,1-\eta/2]$ is
  such that $\support{R}=\support{N(Y)}=Y$ and $N(Y)<R$,
  contradicting~\eqref{eq:N-sup}.
  A very similar situation can be observed in the case of programs, by
  considering the program $P=\ploop A$ for some action~$A$. We write
  $X=\positions(P)$ for its poset of positions:
  \[
    \begin{tikzpicture}[shorten >=2,shorten <=2]
      \draw (0,0) node {$\cdot$} node[above] {$\pbot$};
      \draw (1,0) node {$\cdot$} node[above] {$\ploop[0]\pbot$};
      \draw (2,0) node {$\cdot$} node[above] {$\ploop[0]\ptop$};
      \draw (3,0) node {$\cdot$} node[above] {$\ploop[1]\pbot$};
      \draw (4,0) node {$\cdot$} node[above] {$\ploop[1]\ptop$};
      \draw (5,0) node {$\cdot$} node[above] {$\ploop[2]\pbot$};
      \foreach \x in {0,...,5} \draw[->] (\x,0) -- (\x+1,0);
      \draw[dotted] (6,0) -- (7,0);
      \draw (7,0) node {$\cdot$} node[above] {$\ptop$};
    \end{tikzpicture}
  \]
  The subset $Y=X\setminus\set{\ptop}$ does not have a normal form for similar
  reasons as above.
\end{example}

\begin{remark}
  In order to accommodate with situations such as in previous example, one could
  think of allowing (semi-)open intervals in addition to closed ones in
  regions. However, the operations on those quickly become very difficult to
  handle because it turns out that, in the case of programs of the form
  $\ppar PQ$, we need to be able to specify whether bounds of intervals are open
  or not for each component of the parallel composition.
\end{remark}

\section{The boolean algebra of finitely complemented regions}
\label{sec:boolean-algebra}
Given a set $Y\subseteq X$, we write $\bar{Y}=X\setminus Y$ for its \emph{complement}.
For practical applications, given a region~$R$ in~$\regions(X)$, we need to be
able to compute a region~$\ol R$ which covers the complement of the region, \ie
a region~$\ol R$ such that $\support{\ol R}=\ol{\support R}$. Typically, we have
explained in \cref{sec:mutex} that the pruned state space is obtained as the
complement in the state space of the forbidden region. Moreover, even when the
region~$R$ is not in normal form, we are able to compute the region~$\ol R$ in
normal form, which suggests that we should be able to compute the normal form of
a region~$R$ as $N(\support{R})=\ol{\ol R}$. Performing these computations will also
involve computing intersections and unions of regions, so that we will show that
-- under suitable hypothesis -- regions which admit a normal form have the
structure of a boolean algebra, providing explicit algorithmic constructions for
the corresponding operations.
This generalizes the situation considered in~\cite{datc} (which is limited to
regions which are subsets of~$\R^n$) and in~\cite{haucourt:geom-cons} (which is
limited to products of graphs). In order to ensure that our constructions are
applicable in practice, we only consider regions which are \ul{finite} in the
following (and showing that finiteness is preserved by our constructions will be
non-trivial). In particular, by a ``normal form'', we always mean a finite
region in normal form.

\subsection{The complement of an interval}
\label{sec:int-compl}
We begin by investigating the computation of the region corresponding to the
complement of an interval. As an illustration, consider the space
$X=[0,5]^2\subseteq\N^2$ and $R=\set{I}$ with $I=[(2,2),(3,3)]$, as pictured on
the left
\begin{align*}
  R&=
  \begin{tikzpicture}[baseline=(b.base),scale=.4]
    \coordinate (b) at (2.5,2);
    \foreach \i in {0,...,5} \foreach \j in {0,...,5} \filldraw (\i,\j) circle (.02);
    \draw (0,0) rectangle (5,5);
    \draw[pattern=north east lines] (2,2) rectangle (3,3);
  \end{tikzpicture}
  &
  \ol R&=
  \begin{tikzpicture}[baseline=(b.base),scale=.4]
    \coordinate (b) at (2.5,2);
    \foreach \i in {0,...,5} \foreach \j in {0,...,5} \filldraw (\i,\j) circle (.02);
    \draw (0,0) rectangle (5,5);
    \draw[pattern=north east lines] (0,0) rectangle (5,1);
    \draw[pattern=north west lines] (0,0) rectangle (1,5);
    \draw[pattern=north east lines] (0,4) rectangle (5,5);
    \draw[pattern=north west lines] (4,0) rectangle (5,5);
  \end{tikzpicture}
  &
  \lcomp I&=
  \begin{tikzpicture}[baseline=(b.base),scale=.4]
    \coordinate (b) at (2.5,2);
    \foreach \i in {0,...,5} \foreach \j in {0,...,5} \filldraw (\i,\j) circle (.02);
    \draw (0,0) rectangle (5,5);
    \draw[pattern=north east lines] (0,0) -- (0,5) -- (1,5) -- (1,1) -- (5,1) -- (5,0) -- cycle;
  \end{tikzpicture}
  &
  \ucomp I&=
  \begin{tikzpicture}[baseline=(b.base),scale=.4]
    \coordinate (b) at (2.5,2);
    \foreach \i in {0,...,5} \foreach \j in {0,...,5} \filldraw (\i,\j) circle (.02);
    \draw (0,0) rectangle (5,5);
    \draw[pattern=north east lines] (0,5) -- (5,5) -- (5,0) -- (4,0) -- (4,4) -- (0,4) -- cycle;
  \end{tikzpicture}
\end{align*}
The normal form of its complement is the region
\[
  \ol R=\set{[\bot,y_1],[\bot,y_2],[x_1,\top],[x_2,\top]}
\]
with $\bot=(0,0)$, $\top=(5,5)$, $y_1=(1,5)$, $y_2=(5,1)$, $x_1=(0,4)$,
$x_2=(4,0)$. Here, it should be noted that $\ol R=\lcomp I\cup\ucomp I$ where
$\lcomp I$ (\resp $\ucomp I$) is the set of elements of~$X$ which are not above
$(2,2)$ (\resp below $(3,3)$), nor in fact above any element of~$I$. Moreover,
the points $y_1$ and $y_2$ are the maximal elements of the set~$\lcomp I$ and,
similarly, the points $x_1$ and $x_2$ are the minimal element of~$\ucomp I$. We
can generalize the situation as follows.

Given a set~$Y\subseteq X$, its \emph{lower closure} $\lc Y$ and \emph{lower
  complement}~$\lcomp Y$ are respectively the sets
\begin{align*}
  \lc Y&=\setof{x\in X}{\exists y\in Y.\ x\leq y}
  &
  \lcomp Y&=\setof{x\in X}{\forall y\in Y. x\not\geq y}
\end{align*}
The \emph{upper closure} $\uc Y$ and \emph{lower complement}~$\lcomp Y$ of~$Y$
are defined dually. We write $\max Y$
for the set of maximal elements of~$Y$, and $\min Y$ for the set of minimal
elements. The set~$Y$ is \emph{finitely lower generated} when there exists a
finite set $Y'$ such that $Y=\lc Y'$ (the notion of \emph{finitely upper
  generated} is defined dually). By extension, we say that an element $x$ of~$X$
is finitely lower (\resp upper) generated when $\set{x}$ is.

\begin{lemma}
  \label[lem]{lem:max-finite}
  If $Y$ is finitely lower generated then $\max Y$ is finite and we have $Y=\lc\max Y$.
\end{lemma}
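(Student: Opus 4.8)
The statement has two parts: (i) if $Y$ is finitely lower generated then $\max Y$ is finite, and (ii) in that case $Y = \lc\max Y$. I would start by fixing a finite set $Y'$ with $Y = \lc Y'$ and reducing everything to facts about $Y'$.

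For part (i), the plan is to show $\max Y \subseteq Y'$, from which finiteness is immediate. Indeed, suppose $x \in \max Y$. Since $x \in Y = \lc Y'$, there is some $y \in Y'$ with $x \leq y$; but $y \in Y' \subseteq \lc Y' = Y$, so $x \leq y$ with both in $Y$ and $x$ maximal forces $x = y$, hence $x \in Y'$. (Here I am using that $Y' \subseteq \lc Y'$ since $\leq$ is reflexive.) This gives $\max Y \subseteq Y'$, so $\max Y$ is finite.

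For part (ii), I would prove the two inclusions separately. The inclusion $\lc\max Y \subseteq Y$ is easy: if $x \leq z$ for some $z \in \max Y \subseteq Y$, then $x \in \lc Y$, and since $Y = \lc Y'$ is downward closed (as a lower closure always is — if $x \le z \le y$ with $y \in Y'$ then $x \le y$ by transitivity), we get $x \in Y$. For the converse $Y \subseteq \lc\max Y$, take $x \in Y$; I need to produce a maximal element of $Y$ above $x$. Since $x \in \lc Y'$, pick $y \in Y'$ with $x \leq y$. Now I want to find a maximal element of $Y$ above $y$. The key point is that the set $\uc\{y\} \cap Y'$ is a nonempty finite subset of $Y'$; using finiteness (so no infinite strictly ascending chains within it, or simply: a nonempty finite poset has a maximal element) pick $z$ maximal in $\uc\{y\} \cap Y'$ with $y \le z$. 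I then claim $z$ is maximal in $Y$: if $z \leq w \in Y = \lc Y'$, then $w \leq y'$ for some $y' \in Y'$, so $y \le z \le y'$ puts $y' \in \uc\{y\} \cap Y'$, and $z \leq y'$ with $z$ maximal in that set gives $z = y'$, hence $w \leq z$, so $w = z$. Thus $z \in \max Y$ and $x \leq y \leq z$, giving $x \in \lc\max Y$.

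The only subtle point — the main obstacle, such as it is — is justifying the existence of a maximal element above $y$ inside $Y$. Working inside $Y'$, which is finite, makes this trivial (every nonempty finite poset has maximal elements above any given element), so I would phrase the argument to stay within $Y'$ as above rather than invoke well-foundedness of $X$ or Zorn's lemma on all of $Y$. Once that is set up, both inclusions are routine transitivity/reflexivity chasing.
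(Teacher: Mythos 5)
Your proof is correct: showing $\max Y\subseteq Y'$ for finiteness, and producing a maximal element of $Y$ above any $x\in Y$ by taking a maximal element of the finite set $\uc\set{y}\cap Y'$, is exactly the standard argument intended here (the paper leaves this lemma without an explicit proof). No gaps — the key subtlety, that maximality need only be checked inside the finite generating set $Y'$, is handled properly.
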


\noindent
We say that a set $Y\subseteq X$ is \emph{finitely lower} (\resp \emph{upper})
\emph{complemented} when $\lcomp Y$ (\resp $\ucomp Y$) is finitely lower (\resp
upper) generated. By extension, we say that an element $x$ of~$X$
is finitely lower (\resp upper) complemented when $\set{x}$ is.
We say that $Y$ is \emph{finitely complemented} when it is
both finitely lower and upper complemented. We can finally characterize
intervals which admit a complement in normal form as follows:

\begin{proposition}
  \label[prop]{prop:int-normal-comp}
  Given a bounded lattice $X$ and $I$ an interval, the region $R=\set{I}$ has a
  complement in normal form if and only if $\support I$ is finitely
  complemented. In this case, the normal form of its complement is
  \[
    \ol R=\setof{[\bot,y]}{y\in\max(\lcomp I)}\cup\setof{[x,\top]}{x\in\min(\ucomp I)}
  \]
\end{proposition}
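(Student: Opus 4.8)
The plan is to prove the two directions of the equivalence separately, and then establish the explicit formula. First I would unwind what it means for $R=\set I$ to have a complement in normal form. A region $\ol R$ with $\support{\ol R}=\ol{\support I}$ must, if it is to be in normal form, consist exactly of the maximal intervals contained in $\ol{\support I}$; and $N(\ol{\support I})$ exists precisely when the join in formula \eqref{eq:N-sup} is attained by an actual region, equivalently (by \cref{lem:finite-norm-supp} and the adjunction of \cref{lem:left-adj-preorder,lem:sup-incr}) when $\ol{\support I}$ is covered by its maximal subintervals and there are finitely many of them. The geometric picture in the example is the guide: a maximal interval contained in $\ol{\support I}=X\setminus[x_0,y_0]$ (writing $I=(x_0,y_0)$) must avoid the ``box'' $[x_0,y_0]$, and since $X$ is a bounded lattice, such an interval either lies entirely below in the sense of missing the constraint ``$\geq x_0$'' — i.e.\ its support is contained in $\lcomp I$ — or entirely misses ``$\leq y_0$'' — i.e.\ it is contained in $\ucomp I$. (An interval $(u,v)$ with $u\not\geq x_0$ has its whole support in $\lcomp I$, since $\lcomp I$ is downward closed below $v$... more precisely one checks $z\in[u,v]\Rightarrow z\not\geq x_0$, using that $z\le v$ and $v\not\geq x_0$ would force... — this is the one small lemma to verify, that $\lcomp I$ and $\ucomp I$ are themselves intervals' supports of the right shape.) Hence the maximal intervals avoiding $[x_0,y_0]$ are exactly the maximal intervals inside $\lcomp I$ together with the maximal intervals inside $\ucomp I$.

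Next I would identify those maximal intervals. Since $X$ is a bounded lattice, $\lcomp I$ is downward closed: if $z\in\lcomp I$ and $z'\le z$ then $z'\not\ge x_0$ (else $z\ge z'\ge x_0$), so $z'\in\lcomp I$. A downward-closed set has $\bot$ as least element and its maximal subintervals are exactly $[\bot,y]$ for $y\in\max(\lcomp I)$; dually $\ucomp I$ is upward closed with maximal subintervals $[x,\top]$ for $x\in\min(\ucomp I)$. So the candidate normal form is forced to be exactly the displayed set $\setof{[\bot,y]}{y\in\max(\lcomp I)}\cup\setof{[x,\top]}{x\in\min(\ucomp I)}$. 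For the ``if'' direction: assuming $\support I$ is finitely complemented, $\lcomp I$ is finitely lower generated, so by \cref{lem:max-finite} $\max(\lcomp I)$ is finite and $\lcomp I=\lc\max(\lcomp I)=\bigcup_{y\in\max(\lcomp I)}[\bot,y]$ (using $\bot$ as least element of $\lcomp I$); dually for $\ucomp I$. Thus the displayed region is finite, has support $\lcomp I\cup\ucomp I=\ol{\support I}$ (this set-theoretic identity, $X\setminus[x_0,y_0]=\lcomp I\cup\ucomp I$, needs a one-line check in a lattice: $z\notin[x_0,y_0]$ iff $z\not\ge x_0$ or $z\not\le y_0$), and is maximal for $\le$ because every maximal interval of the support is already in it and it contains no redundant intervals — so it equals $N(\ol{\support I})$ by the characterization of normal form as the $\le$-greatest region with the given support.

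For the ``only if'' direction, I would argue contrapositively: if $\support I$ is not finitely complemented, then WLOG $\lcomp I$ is not finitely lower generated. Then $\lcomp I$ is not of the form $\lc(\text{finite})$, and in particular either $\max(\lcomp I)$ fails to generate $\lcomp I$ or it is infinite. In either case I would run the same enlargement trick as in \cref{ex:no-N}: any region $R$ with $\support R=\ol{\support I}$ contains, for each of its intervals meeting $\lcomp I$, an interval $[u,v]\subseteq\lcomp I$; since $v$ is not maximal in $\lcomp I$ (or since no finite family of such intervals can cover $\lcomp I$), one can replace $[u,v]$ by a strictly larger interval still inside $\lcomp I\subseteq\ol{\support I}$, producing $R'$ with the same support and $R<R'$; this shows $\bigvee\setof{R}{\support R\subseteq \ol{\support I}}$ is not attained, so by \eqref{eq:N-sup} the normal form $N(\ol{\support I})$ does not exist, i.e.\ $R$ has no complement in normal form.

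The main obstacle I anticipate is the ``only if'' direction: making the non-attainment argument uniform over the two failure modes (infinitely many maximal elements versus maximal elements not generating the set) and correctly handling the interaction between the two components $\lcomp I$ and $\ucomp I$ — in principle a region covering $\ol{\support I}$ could use intervals straddling both, so I must first establish the structural fact that any interval contained in $\ol{\support I}$ lies wholly in $\lcomp I$ or wholly in $\ucomp I$, which is what makes the decomposition, and hence the whole proposition, go through cleanly.
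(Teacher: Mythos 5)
Most of your plan is sound and matches the structure the statement suggests: the dichotomy that an interval contained in $\ol{\support I}$ lies wholly in $\lcomp I$ or wholly in $\ucomp I$ (do carry this out — writing $I=(x_0,y_0)$, the only nontrivial case is $v\geq x_0$ and $u\leq y_0$, which is killed by observing $u\vee x_0\in[u,v]\cap[x_0,y_0]$, and this is exactly where the lattice hypothesis enters), the identification of the maximal intervals of the downward-closed set $\lcomp I$ as $[\bot,y]$ with $y\in\max(\lcomp I)$, and \cref{lem:max-finite} for the ``if'' half. Your verification that the displayed region is indeed $N(\ol{\support I})$ is a bit hand-wavy (``no redundant intervals'') but the needed check — every region with support in $\ol{\support I}$ is $\preceq$ it by dichotomy plus generation, and any region it is $\preceq$-below already contains its intervals by maximality of the $y$'s and minimality of the $x$'s — goes through with your ingredients.

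The genuine gap is the ``only if'' direction. Your enlargement argument ends with ``the join is not attained, so the normal form does not exist'', but this is wrong in the second failure mode you yourself list: if $\max(\lcomp I)$ generates $\lcomp I$ but is infinite, the intervals $[\bot,y]$ with $y$ maximal cannot be strictly enlarged and the join \emph{is} typically attained — by an infinite region; what fails is only finiteness, which is part of what ``normal form'' means in \cref{sec:boolean-algebra}, and your argument does not address that. Moreover the step ``any region with support $\ol{\support I}$ contains, for each of its intervals meeting $\lcomp I$, an interval $[u,v]\subseteq\lcomp I$'' is false: a point of $\lcomp I\cap\ucomp I$ may be covered solely by intervals lying inside $\ucomp I$, so there need not be any $\lcomp I$-interval available to enlarge. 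Both problems disappear if you argue directly from the universal property instead of contrapositively: suppose $\ol R=N(\ol{\support I})$ exists and is finite. For every $z\in\lcomp I$, the singleton region $\set{(\bot,z)}$ has support contained in $\ol{\support I}$ (downward closure of $\lcomp I$), hence by \eqref{eq:N-sup} it is $\preceq\ol R$, so $(\bot,z)$ is contained in some interval of $\ol R$; that interval contains $\bot\notin\ucomp I$, so by the dichotomy it has the form $(\bot,v)$ with $v\in\lcomp I$ and $z\leq v$. Hence $\lcomp I=\lc V$ where $V$ is the finite set of tops of such intervals of $\ol R$, i.e.\ $\support I$ is finitely lower complemented, and dually for $\ucomp I$. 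This single argument handles both of your failure modes uniformly and avoids the coverage issue.
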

  %


\subsection{The complement of a region}
%
Our aim is now to generalize \cref{prop:int-normal-comp}, and characterize
normalizable regions (as opposed to intervals) which admit a complement in
normal form.
The condition which emerged in order to capture such situations is the following
one:

\begin{definition}
  \label[def]{def:fin-comp}
  A poset $(X,\leq)$ is \emph{finitely complemented} if
  \begin{enumerate}
  \item given a finitely lower complemented element $x\in X$, every element of
    $\max(\lcomp{\set{x}})$ is finitely upper complemented,
  \item given a finitely upper complemented element $x\in X$, every element of
    $\min(\ucomp{\set{x}})$ is finitely lower complemented.
  \end{enumerate}
\end{definition}

\begin{remark}
  \label[rem]{rem:wo-ug}
  In the case where $X$ is a well-order, every upwards closed subset is
  necessarily finitely upper generated: the set~$\min X$ of minimal elements
  of~$X$ generates~$X$ because it is well-founded, and is finite because it is
  an antichain. The first condition is thus always satisfied.
\end{remark}

We suppose fixed an ambient bounded lattice~$X$, in which the construction will
be performed. For technical reasons, it will be convenient to suppose that $X$
is also well-ordered, which, by \cref{prop:pos-wo}, is a reasonable restriction
for the applications we have in mind. We write
\[
  \nsupports(X)=\setof{Y\subseteq X}{\text{both $N(Y)$ and $N(\ol Y)$ exists and are finite}}
\]
for the poset (under inclusion) of \emph{normal supports} (\ie supports of
regions in normal forms, whose complements are also supports of regions in
normal forms).
Our goal is to show the following theorem:

\begin{theorem}
  \label{thm:nf-ba}
  The poset $\nsupports(X)$ is a boolean algebra if and only if $X$ is finitely
  complemented.
\end{theorem}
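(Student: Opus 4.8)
The goal is to characterise when $\nsupports(X)$ is a boolean algebra, with the condition being that $X$ is finitely complemented (\cref{def:fin-comp}). Since we have already assumed $X$ is a well-order (so, by \cref{rem:wo-ug}, condition (1) of \cref{def:fin-comp} is automatic), the substantive content lives in condition (2), and the proof should isolate exactly where it is used. I would structure the argument as an \emph{if and only if}, with the ``if'' direction building up the boolean algebra structure operation by operation, and the ``only if'' direction producing a failure of normalizability from a failure of finite complementation.

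\emph{The ``if'' direction.} Assume $X$ is finitely complemented. I would first check that $\nsupports(X)$ is closed under the boolean operations and that these operations, computed at the level of supports, are realised by honest finite normal regions. The bottom and top are $\emptyset$ and $X$, with normal regions $\emptyset$ and $\set{[\bot,\top]}$ (using that $X$ is a bounded lattice). For complement, closure is immediate by the symmetry of the definition of $\nsupports(X)$: if $N(Y)$ and $N(\ol Y)$ both exist and are finite, then so do $N(\ol Y)$ and $N(\ol{\ol Y})=N(Y)$. The crux is \textbf{finite union} (equivalently, by complement, finite intersection): given $Y_1,Y_2\in\nsupports(X)$, I must show $N(Y_1\cup Y_2)$ exists and is finite, and dually for $\ol{Y_1\cup Y_2}=\ol{Y_1}\cap\ol{Y_2}$. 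Here is where \cref{prop:int-normal-comp} and the finitely-complemented hypothesis do the work: one expresses the normal region of a union in terms of the maximal intervals contained in $Y_1\cup Y_2$, and one must argue this set of maximal intervals is finite and that its complement is again finitely generated from both sides. The well-order hypothesis gives finiteness of antichains (hence finitely many maximal intervals once one knows they are lower-generated), and condition (2) of \cref{def:fin-comp} is precisely what propagates ``finitely upper complemented'' through the $\min(\ucomp{-})$ operation so that the complement of the union is still a \emph{finite} normal region. Once union, intersection, complement, $\bot$ and $\top$ are all shown to land in $\nsupports(X)$ and to satisfy the De Morgan and distributivity laws — which hold automatically because they hold for the underlying subsets of $X$ and $\support{-}$ together with $N$ restrict to a bijection between $\nsupports(X)$ and a sublattice of $\powset(X)$ by \cref{lem:finite-norm-supp} — the boolean algebra axioms follow. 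I would phrase this last point as: $\support{-}\colon\nsupports(X)\to\powset(X)$ is an order-embedding whose image is closed under the boolean operations of $\powset(X)$, hence a boolean subalgebra, and $N$ is its inverse.

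\emph{The ``only if'' direction.} Assume $X$ is \emph{not} finitely complemented; since condition (1) is automatic, condition (2) fails: there is a finitely upper complemented element $x\in X$ and an element $x'\in\min(\ucomp{\set{x}})$ which is not finitely lower complemented. I would then exhibit a member of $\nsupports(X)$ — or rather, a candidate that \emph{should} be one if $\nsupports(X)$ were a boolean algebra — whose complement fails to be a finite normal support. The natural candidate is built from the interval $I=[x,\top]$ (or rather its complement): $\set{I}$ has a normal complement $\ol{\set I}$ by \cref{prop:int-normal-comp} applied with the roles arranged so that only upper-complementedness of $x$ is needed, and $\ol{\set I}$ contains the interval $[x',\top]$ for $x'\in\min(\ucomp{\set x})$. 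Now taking the complement \emph{again} forces one to compute $\lcomp{\set{x'}}$, and the failure of $x'$ to be finitely lower complemented means $N$ of the relevant support is not a finite region — so this support is not in $\nsupports(X)$, and $\nsupports(X)$ is not closed under complement, hence not a boolean algebra. I would model the bookkeeping here on \cref{ex:no-N}: the obstruction is exactly the non-existence of a maximal interval, caused by an infinite ascending approximation with no top, which is what ``not finitely lower generated'' produces in a well-order.

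\emph{Main obstacle.} The hard part is the finite-union step in the ``if'' direction: showing that the set of maximal intervals contained in $Y_1\cup Y_2$ is finite, and — more delicate — that the complement of this union is again \emph{finitely} complemented on both sides, i.e. that finiteness is genuinely preserved rather than merely existence of a normal form. This is where one must chase how $\max$, $\min$, $\lcomp{-}$, $\ucomp{-}$ interact, using the well-order to control antichains and using condition (2) of \cref{def:fin-comp} each time an upper-complement of a lower-complemented point (or vice versa) is formed; the bound on the number of intervals produced will come from \cref{lem:max-finite} together with the finite-antichain property. I expect the bulk of the real work — and any lemmas the authors need to introduce — to be concentrated precisely in making this finiteness propagation rigorous.
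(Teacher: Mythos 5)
Your ``only if'' direction does not work as stated. You conclude that ``$\nsupports(X)$ is not closed under complement'', but closure under complement is automatic from the symmetry of the definition of $\nsupports(X)$ (membership requires \emph{both} $N(Y)$ and $N(\ol Y)$ to exist and be finite) --- a symmetry you yourself invoke in the ``if'' direction. So complementing an element of $\nsupports(X)$, once or twice, can never leave $\nsupports(X)$, and your chain starting from $I$ cannot produce a contradiction: for the support of $[\bot,x]$ one has $N(\ol{\lc{\set x}})=\setof{(x'',\top)}{x''\in\min(\ucomp{\set x})}$, which is finite by \cref{rem:wo-ug}, and complementing again simply returns $\set{(\bot,x)}$; nothing ever forces the computation of $\lcomp{\set{x'}}$ for $x'$ alone. (Your bookkeeping is also off: by \cref{prop:int-normal-comp} the complement of $\set{[x,\top]}$ is covered by intervals $[\bot,y]$ with $y\in\max(\lcomp{\set x})$, not by intervals $[x',\top]$; the latter arise from complementing $[\bot,x]$.) The missing idea is a failure of closure under \emph{intersection}: the paper (\cref{prop:nf-ba-easy}) takes the two sets $[\bot,y]$ and $\ol{[\bot,x]}$, with $y\in\min(\ucomp{\set x})$ not finitely lower complemented, checks both lie in $\nsupports(X)$, and observes that their intersection is the singleton $[y,y]$, which is not in $\nsupports(X)$ by \cref{prop:int-normal-comp} precisely because $y$ is not finitely lower complemented. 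A binary meet, not a complement, is what detects the failure of condition (2).

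Your ``if'' direction is a plan rather than a proof, and its central claim is doubtful. You propose to obtain finiteness of the set of maximal intervals inside $Y_1\cup Y_2$ from the finite-antichain property of the well-order $X$; but maximal intervals form an antichain for \emph{inclusion of intervals}, an order which involves $X^{\mathrm{op}}$ (an interval $(x,y)$ shrinks as $x$ increases), and well-orderedness of $X$ gives no control over such antichains. You also never argue that every interval contained in $Y_1\cup Y_2$ sits inside a maximal one, i.e.\ that $N(Y_1\cup Y_2)$ exists at all --- exactly the phenomenon that fails in \cref{ex:no-N}. The paper sidesteps arguing directly on $\nsupports(X)$: it introduces the class $\fcregions(X)$ of regions all of whose intervals are finitely complemented, defines union, intersection and complement by explicit finite formulas (\cref{def:boolean-operations}), shows these stay in the class --- the key point being that finite lower (resp.\ upper) complementedness of elements is stable under $\vee$ (resp.\ $\wedge$), \cref{lem:def-op-bool} --- and that they commute with supports (\cref{lem:op-commute-supp}), so that finiteness is preserved by construction and $\fcsupports(X)$ is a boolean subalgebra of $\powset(X)$ (\cref{prop:fcr-ba}). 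The remaining nontrivial step, invisible in your sketch, is the identification $\fcsupports(X)=\nsupports(X)$ (\cref{prop:fc-normal}), whose reverse inclusion needs an extended complement operation on arbitrary regions. Your closing remark that your high-level picture (an order-embedding of $\nsupports(X)$ into $\powset(X)$ with image closed under the boolean operations) is where the work lies is accurate, but that work --- existence and finiteness propagation --- is exactly what is missing.
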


\noindent
One of the implications is easily shown:

\begin{proposition}
  \label[prop]{prop:nf-ba-easy}
  The left-to-right implication of \cref{thm:nf-ba} holds.
\end{proposition}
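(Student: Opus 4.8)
The plan is to prove the contrapositive: assuming $X$ is \emph{not} finitely complemented, I exhibit an element of $\nsupports(X)$ — or more precisely, a subset $Y$ that ought to lie in $\nsupports(X)$ — whose complement fails to be a normal support, so that $\nsupports(X)$ is not closed under complementation and hence not a boolean algebra. By symmetry of the two conditions in \cref{def:fin-comp}, suppose condition~(1) fails: there is a finitely lower complemented element $x\in X$ and some $y\in\max(\lcomp{\set x})$ which is \emph{not} finitely upper complemented. First I would consider the interval $I=(x,\top)$ (here using that $X$ is a bounded lattice, so $\top$ exists and $x\leq\top$), and set $Y=\support{I}=[x,\top]=\uc{\set x}$. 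Since $x$ is finitely lower complemented, $\lcomp{\set x}=\lcomp I$ is finitely lower generated, and since $X$ is well-ordered, $\ucomp I=X\setminus\uc{\set x}$ is finitely upper generated by \cref{rem:wo-ug}; thus by \cref{prop:int-normal-comp}, $R=\set I$ has a complement in normal form, and moreover $N(Y)$ itself is finite (indeed $\set I$ is already in normal form, $[x,\top]$ being the unique maximal interval inside $\uc{\set x}$). Hence $Y\in\nsupports(X)$.

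Next I would show $\ol Y=\lcomp{\set x}$ is \emph{not} a normal support, i.e. that $N(\ol Y)$ either is undefined or is infinite. The key point is that $y\in\max(\ol Y)$ is not finitely upper complemented, which means $\ucomp{\set y}$ is not finitely upper generated, i.e. $\min(\ucomp{\set y})$ is infinite or fails to generate $\ucomp{\set y}$ — but since $X$ is well-ordered the latter cannot happen, so $\min(\ucomp{\set y})$ is infinite. Now any region $S$ with $\support S=\ol Y$ must contain, for the point $y$, some interval $I'\ni y$ with $I'\subseteq\ol Y$; writing $I'=(a,b)$ we have $a\leq y\leq b$ and $b\in\ol Y$, so $b\not\geq x$, meaning the upper endpoints of intervals through $y$ are constrained to lie in $\ol Y$. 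I would argue that, because $\min(\ucomp{\set y})$ is infinite and $X$ is well-founded, one can keep enlarging the upper endpoint of such an interval past any candidate maximal interval — mirroring the argument in \cref{ex:no-N} where removing an interval and replacing it by a strictly larger one contradicts the maximality characterization~\eqref{eq:N-sup} — so that $N(\ol Y)$, defined as $\bigvee\setof{S}{\support S\subseteq\ol Y}$, is either undefined or necessarily infinite. Either way $\ol Y\notin\nsupports(X)$.

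Finally, I would conclude: since $Y\in\nsupports(X)$ but $\ol Y\notin\nsupports(X)$, the poset $\nsupports(X)$ is not closed under complement, hence is not a boolean algebra. The dual argument, starting from a failure of condition~(2), uses the interval $(\bot,x)$ in place of $(x,\top)$. The main obstacle I anticipate is the middle step: making precise why "$y$ not finitely upper complemented" forces \emph{every} region supported on $\ol Y$ to be non-normalizable or infinite, rather than merely the obvious candidate region; this requires carefully combining the well-order hypothesis (to rule out the "infinite antichain of maximal intervals" being itself a valid normal form — it would be finite as an antichain, so the obstruction must instead be the impossibility of a maximum in the $\leq$-order on regions) with the adjoint-style maximality characterization~\eqref{eq:N-sup}, much as in \cref{ex:no-N}. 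The other, easier cases of the inductive/structural bookkeeping (checking $Y\in\nsupports(X)$, handling $\top$ and $\bot$, the symmetry) should be routine given \cref{prop:int-normal-comp} and \cref{rem:wo-ug}.
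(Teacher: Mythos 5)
There is a genuine gap, and it is structural rather than a fixable detail. Your plan is to contradict the boolean algebra structure by exhibiting a $Y\in\nsupports(X)$ with $\ol Y\notin\nsupports(X)$, i.e.\ a failure of closure under complement. But $\nsupports(X)$ is closed under complementation \emph{by definition}: membership of $Y$ requires that both $N(Y)$ and $N(\ol Y)$ exist and are finite, a condition symmetric in $Y$ and $\ol Y$, so $Y\in\nsupports(X)$ if and only if $\ol Y\in\nsupports(X)$. Your own first step already runs into this: having checked via \cref{prop:int-normal-comp} that $N(Y)$ and $N(\ol Y)$ are finite normal forms for $Y=[x,\top]$, you have thereby proved $\ol Y\in\nsupports(X)$, which is exactly what the second half of your argument tries to refute. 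Concretely, $\ol Y=\lcomp{\set x}$ is finitely lower generated by your hypothesis on $x$, so $N(\ol Y)=\setof{[\bot,y']}{y'\in\max(\lcomp{\set x})}$ exists and is finite; the fact that some $y\in\max(\lcomp{\set x})$ is not finitely upper complemented is irrelevant to the existence of $N(\ol Y)$ (it only matters when one later tries to complement an interval with $y$ as an endpoint). The ``enlarging intervals past any maximal one'' argument of \cref{ex:no-N} does not apply here.

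The case analysis is also off. In the ambient setting of \cref{thm:nf-ba} the lattice $X$ is assumed well-ordered, and by \cref{rem:wo-ug} condition~(1) of \cref{def:fin-comp} can never fail (every $\ucomp{\set y}$ is upward closed, hence finitely upper generated); so the case you work out in detail is vacuous, and indeed your own deduction that $\min(\ucomp{\set y})$ is infinite contradicts the finiteness of antichains in a well-order. There is no symmetry to invoke: only condition~(2) can fail, and the obstruction it produces is a failure of closure under \emph{intersection}, not complement. This is the paper's route: take $x$ finitely upper complemented and $y\in\min(\ucomp{\set x})$ not finitely lower complemented; then both $[\bot,y]$ and $\ol{[\bot,x]}$ lie in $\nsupports(X)$ (each via \cref{prop:int-normal-comp} applied to a single interval), but their intersection is the singleton $[y,y]$, which is not in $\nsupports(X)$ precisely because $y$ is not finitely lower complemented, so its complement admits no finite normal form. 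To repair your proof you should redirect it along these lines.
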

\begin{proof}
  By contraposition, suppose that $X$ is not finitely complemented. By
  \cref{rem:wo-ug}, the condition (i) in the \cref{def:fin-comp} is always
  satisfied, and therefore (ii) has to be falsified. This means there exists an
  upper complemented element $x\in X$ such that there exists an element
  $y\in\min(\ucomp{\set{x}})$ such that $y$ is not finitely lower generated. We show
  below that both $[\bot,y]$ and $\ol{[\bot,x]}$ belong to $\nsupports(X)$. But
  their intersection is not in $\nsupports(X)$: the set $\nsupports(X)$ is not
  closed under intersections and thus not a boolean algebra. Namely, an easy
  computation shows $[\bot,y]\cap\ol{[\bot,x]}=[y,y]$ and we have that
  $[y,y]\not\in\nsupports(X)$ by \cref{prop:int-normal-comp}, because $y$ is not
  finitely lower complemented.

  It remains to be shown that $[\bot,y]\in\nsupports(X)$. Since $y$ is upper
  complemented and $\bot$ is trivially lower complemented, we have that the
  interval $I=(\bot,y)$ is finitely complemented (by definition). Writing
  $R=\set{I}$, the region $R$ is trivially in normal form, and has a complement
  in normal form by \cref{prop:int-normal-comp}, and therefore $I$ belongs to
  $\nsupports(X)$. The proof that $\ol{[\bot,x]}\in\nsupports(X)$ is similar.
\end{proof}

We say that a region~$R$ is \emph{finitely complemented} if it contains only
intervals whose support is finitely complemented in the sense of
\cref{sec:int-compl}. Beware that this definition does not state that the
support of the region should be finitely complemented. We write
\[
  \fcsupports(X)=\setof{Y\subseteq X}{\text{$Y=\support R$ for some finitely complemented region $R$}}
\]
We also write
\[
  \fcregions(X)=\setof{R\in\regions(X)}{\text{$R$ is finitely complemented}}
\]
so that the elements of $\fcsupports(X)$ are the supports of regions in
$\fcregions(X)$.
The plan of our proof for the missing implication of \cref{thm:nf-ba} is as
follows: we first show that $\fcsupports(X)$ forms a boolean algebra by
explicitly constructing the required operations on finitely complemented regions
(\cref{prop:fcr-ba}) and then show that $\fcsupports(X)$ is isomorphic to
$\nsupports(X)$ (\cref{prop:fc-normal}).

In the rest of this section, we suppose that the poset $X$ is finitely
complemented.  Given two intervals $I=(x,y)$ and $I'=(x',y')$, we write
$(x,y)\cap(x',y')=(x\vee x',y\wedge y')$ for their intersection, which is always
their infimum (\wrt $\subseteq$ order) when defined (\ie
$x\vee x'\leq y\wedge y'$).

\begin{definition}
  \label[def]{def:boolean-operations}
  We define the following operations on regions $R,S$ in $\fcregions(X)$:
  \begin{itemize}
  \item union: $R\ncup S=R\cup S$
  \item intersection: $R\ncap S=\setof{I\cap J}{\text{$I\in R$, $J\in S$ and $I\cap J$ is defined}}$
  \item complement: $\ncomp{R}=\bigncap[(x,y)\in R](\setof{(\bot,y')}{y'\in\max(\lcomp{\set{x}})}\cup\setof{(x',\top)}{x'\in\min(\ucomp{\set{y}})})$
  \end{itemize}  
\end{definition}

\begin{lemma}
  \label[lem]{lem:def-op-bool}
  The above operations are well-defined on $\fcregions(X)$.
\end{lemma}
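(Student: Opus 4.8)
The plan is to check each of the three operations in turn, showing that when applied to regions in $\fcregions(X)$ the result is again a region all of whose intervals have finitely complemented support.

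For \emph{union}, there is nothing to do: $R\ncup S=R\cup S$, and every interval of $R\cup S$ is an interval of $R$ or of $S$, hence has finitely complemented support by hypothesis. For \emph{intersection}, the first point is that $I\cap J=(x\vee x',y\wedge y')$ is a well-defined interval exactly when $x\vee x'\le y\wedge y'$, so $R\ncap S$ is genuinely a set of intervals; this uses that $X$ is a lattice, so the meets and joins exist. Then I must show $\support{(I\cap J)}=[x\vee x',\, y\wedge y']$ is finitely complemented, i.e. both finitely lower and finitely upper complemented. The key observation is that $\lcomp{\set{x\vee x'}}=\lcomp{\set x}\cap\lcomp{\set{x'}}$ (an element fails to dominate $x\vee x'$ iff it fails to dominate $x$ or fails to dominate $x'$ — wait, more carefully: $z\not\ge x\vee x'$ iff $z\not\ge x$ or $z\not\ge x'$, since $z\ge x$ and $z\ge x'$ together give $z\ge x\vee x'$). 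Hmm, that gives a union, not an intersection; let me restate: $\lcomp{\set{x\vee x'}}=\lcomp{\set x}\cup\lcomp{\set{x'}}$. Since $x$ and $x'$ are each finitely lower complemented, $\lcomp{\set x}=\lc F$ and $\lcomp{\set{x'}}=\lc F'$ for finite $F,F'$, and then $\lcomp{\set{x\vee x'}}=\lc(F\cup F')$ is finitely lower generated; dually for the upper complement of $y\wedge y'$. Hence $I\cap J\in\fcregions(X)$.

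For \emph{complement}, I first unwind the definition: $\ncomp R=\bigncap_{(x,y)\in R}\ncomp{\set{(x,y)}}$, where $\ncomp{\set{(x,y)}}=\setof{(\bot,y')}{y'\in\max(\lcomp{\set x})}\cup\setof{(x',\top)}{x'\in\min(\ucomp{\set y})}$. For each interval $(x,y)$ of $R$, its support is finitely complemented, so $\lcomp{\set x}$ is finitely lower generated and $\ucomp{\set y}$ is finitely upper generated; by \cref{lem:max-finite} (and its dual) the sets $\max(\lcomp{\set x})$ and $\min(\ucomp{\set y})$ are finite, so $\ncomp{\set{(x,y)}}$ is a \emph{finite} region. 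Moreover each of its intervals $(\bot,y')$ has $y'$ finitely upper complemented — this is precisely clause (i) of \cref{def:fin-comp} applied to the finitely lower complemented element $x$ — and $\bot$ is trivially finitely lower complemented, so $(\bot,y')\in\fcregions(X)$; dually $(x',\top)\in\fcregions(X)$ using clause (ii). Thus $\ncomp{\set{(x,y)}}\in\fcregions(X)$, and since $R$ is a set (we may take it finite for the applications, but in any case the construction is defined interval by interval) the finite intersection $\ncomp R$ is an iterated $\ncap$ of regions in $\fcregions(X)$; by the intersection case just treated, $\fcregions(X)$ is closed under $\ncap$, so $\ncomp R\in\fcregions(X)$.

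The only genuine obstacle is the complement case, and specifically the need to know that the intervals produced by complementing a single interval again have finitely complemented support — this is exactly why the hypothesis that $X$ is finitely complemented (\cref{def:fin-comp}) is imposed, and the proof is essentially a direct invocation of its two clauses together with the finiteness of $\max$/$\min$ from \cref{lem:max-finite}. The union and intersection cases are routine lattice manipulations once the identity $\lcomp{\set{x\vee x'}}=\lcomp{\set x}\cup\lcomp{\set{x'}}$ (and its dual) is in hand.
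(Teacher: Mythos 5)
Your proof is correct and takes essentially the same route as the paper: the paper's own (one-line) proof says the subtle case is intersection, handled by showing that finite lower (resp.\ upper) complementedness is stable under $\vee$ (resp.\ $\wedge$), which is exactly your identity $\lcomp{\set{x\vee x'}}=\lcomp{\set x}\cup\lcomp{\set{x'}}$ (your corrected union form is the right one). Your treatment of the complement case via the two clauses of \cref{def:fin-comp} together with \cref{lem:max-finite} and closure under $\ncap$ just fills in details the paper leaves implicit, so there is no gap.
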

\begin{proof}
  The most subtle is intersection. It is shown by proving that finitely lower
  (\resp upper) complements of elements of $X$ are stable by $\vee$
  (\resp $\wedge$).
\end{proof}

\begin{lemma}
  \label[lem]{lem:op-commute-supp}
  The above operations are compatible with the corresponding ones on supports:
  for regions $R,S\in\fcregions(X)$, we have
  \begin{align*}
    [R\ncap S]&=[R]\cap[S]
    &
    [R\ncup S]&=[R]\cup[S]
    &
    [\ncomp{R}]&=\ol{[R]}
  \end{align*}
\end{lemma}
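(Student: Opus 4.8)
I want to prove \cref{lem:op-commute-supp}, i.e. that the operations $\ncup$, $\ncap$, $\ncomp{}$ on $\fcregions(X)$ from \cref{def:boolean-operations} compute the corresponding set-theoretic operations on supports. The union case is immediate: $\support{R\ncup S}=\support{R\cup S}=\bigcup_{I\in R\cup S}\support I=\support R\cup\support S$ by definition of support. So the real content is intersection and complement.

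For intersection, the key point is that the support of an interval intersection is the intersection of the supports: for intervals $I=(x,y)$ and $I'=(x',y')$ with $I\cap I'$ defined (i.e. $x\vee x'\le y\wedge y'$), we have $\support{I\cap I'}=[x\vee x',y\wedge y']=[x,y]\cap[x',y']=\support I\cap\support{I'}$, which holds in any lattice since $z\ge x$ and $z\ge x'$ iff $z\ge x\vee x'$, and dually. Conversely, if $I\cap I'$ is \emph{not} defined then $\support I\cap\support{I'}=\emptyset$, since any $z$ in both would witness $x\vee x'\le z\le y\wedge y'$. Then $\support{R\ncap S}=\bigcup_{I\in R,\,J\in S,\,I\cap J\text{ def.}}\support{I\cap J}=\bigcup_{I\in R,\,J\in S}(\support I\cap\support J)=\big(\bigcup_{I\in R}\support I\big)\cap\big(\bigcup_{J\in S}\support J\big)=\support R\cap\support S$, distributing the intersection of unions over the index set.

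For complement, I first handle a single interval $I=(x,y)$. I claim $\ol{\support I}=\lcomp{\set x}\cup\ucomp{\set y}$. Indeed $z\notin[x,y]$ means $z\not\ge x$ or $z\not\le y$; and $z\not\ge x$ is exactly $z\in\lcomp{\set x}$ (the lower complement of a singleton), while $z\not\le y$ is exactly $z\in\ucomp{\set y}$. Now since $I$ is finitely complemented, $\lcomp{\set x}$ is finitely lower generated, so by \cref{lem:max-finite} $\lcomp{\set x}=\lc{\max(\lcomp{\set x})}=\bigcup_{y'\in\max(\lcomp{\set x})}[\bot,y']$, and dually $\ucomp{\set y}=\bigcup_{x'\in\min(\ucomp{\set y})}[x',\top]$; this is precisely the support of the region $S_I:=\setof{(\bot,y')}{y'\in\max(\lcomp{\set x})}\cup\setof{(x',\top)}{x'\in\min(\ucomp{\set y})}$ appearing in the definition of $\ncomp R$. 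Hence $\support{S_I}=\ol{\support I}$. Finally, $\ncomp R=\bigncap_{(x,y)\in R}S_{(x,y)}$, so using the intersection case (already proved, and applicable since each $S_I\in\fcregions(X)$ by \cref{lem:def-op-bool}) together with the fact that $\support R=\bigcup_{I\in R}\support I$, we get $\support{\ncomp R}=\bigcap_{I\in R}\support{S_I}=\bigcap_{I\in R}\ol{\support I}=\ol{\bigcup_{I\in R}\support I}=\ol{\support R}$, by De Morgan.

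**Main obstacle.** The only genuinely delicate point is bookkeeping around degenerate cases: empty intersections $I\cap J$ (handled by noting $\support I\cap\support J=\emptyset$ there, so they contribute nothing and may be safely omitted from the union), and the empty region $R=\emptyset$, for which $\support{\emptyset}=\emptyset$, $\ncomp{\emptyset}$ is an empty intersection and should be interpreted as the top region $\set{(\bot,\top)}$ with support $X=\ol\emptyset$. I should also make sure that the max/min sets invoked are finite and actually generate — this is exactly \cref{lem:max-finite}, whose hypothesis is guaranteed by $R$ being finitely complemented. None of this requires the finite-complementedness of $X$ itself; that hypothesis is only needed for \cref{lem:def-op-bool} (well-definedness, i.e. that the results stay in $\fcregions(X)$), which I am allowed to assume.
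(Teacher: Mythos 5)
Your proof is correct, and it is exactly the argument the paper leaves implicit (the lemma is stated without proof in the source): union is definitional, intersection reduces to the lattice identity $[x,y]\cap[x',y']=[x\vee x',y\wedge y']$ (empty when $x\vee x'\not\leq y\wedge y'$), and complement follows from the per-interval identity $\ol{\support I}=\lcomp{\set x}\cup\ucomp{\set y}$ together with \cref{lem:max-finite} and De Morgan. The only step you gloss over is the identification $\lcomp{\support I}=\lcomp{\set x}$ and $\ucomp{\support I}=\ucomp{\set y}$, which is what turns ``$\support I$ finitely complemented'' into finite lower generation of $\lcomp{\set x}$ (and dually); it is immediate since $x$ and $y$ are the least and greatest elements of $[x,y]$, and your handling of the degenerate cases (undefined intersections, $R=\emptyset$ with $\ncomp{\emptyset}=\set{(\bot,\top)}$) is consistent with the intended reading.
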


\noindent
The finitely complemented regions $\fcregions(X)$ thus form a sub-boolean
algebra of $\powset(X)$:

\begin{corollary}
  \label[prop]{prop:fcr-ba}
  The set $\fcsupports(X)$ is a boolean algebra.
\end{corollary}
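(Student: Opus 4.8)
The plan is to exhibit $\fcsupports(X)$ as a boolean subalgebra of the powerset boolean algebra $(\powset(X),\subseteq)$, so that all the boolean algebra axioms (distributivity, complementation laws, etc.) come for free by inheritance. Concretely, I would check that $\fcsupports(X)$ contains the bottom element $\emptyset$ and the top element $X$ of $\powset(X)$, and is closed under the set-theoretic operations $\cup$, $\cap$ and $\ol{(-)}$; since these are all equationally defined, any subset of $\powset(X)$ with these closure properties is automatically a boolean algebra for the induced operations.

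For the two constants: the empty region $\emptyset\in\regions(X)$ contains no interval, hence is vacuously finitely complemented, and $\support\emptyset=\emptyset$, so $\emptyset\in\fcsupports(X)$. For the top, I would consider the region $\set{(\bot,\top)}$: since $X$ is a bounded lattice, $\support{(\bot,\top)}=[\bot,\top]=X$, and the interval $(\bot,\top)$ is finitely complemented because $\lcomp{X}=\ucomp{X}=\emptyset$ (every element lies above $\bot$ and below $\top$), the empty set being a finite generating set. Hence $X=\support{\set{(\bot,\top)}}\in\fcsupports(X)$.

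For closure, take $Y,Y'\in\fcsupports(X)$ and pick finitely complemented regions $R,S$ with $\support R=Y$ and $\support S=Y'$. By \cref{lem:def-op-bool}, the regions $R\ncup S$, $R\ncap S$ and $\ncomp R$ again lie in $\fcregions(X)$, and by \cref{lem:op-commute-supp} their supports are exactly $Y\cup Y'$, $Y\cap Y'$ and $\ol Y$. Therefore $Y\cup Y'$, $Y\cap Y'$ and $\ol Y$ all belong to $\fcsupports(X)$. Combining this with the previous paragraph, $\fcsupports(X)$ is closed under all boolean operations of $\powset(X)$ and contains $0$ and $1$, hence is a boolean subalgebra of $\powset(X)$ — in particular, a boolean algebra.

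There is no real obstacle remaining at this stage: all the substantive work has been done in \cref{lem:def-op-bool} (well-definedness of the region-level operations, whose delicate point is the stability of finitely lower, resp. upper, complemented elements under $\vee$, resp. $\wedge$) and in \cref{lem:op-commute-supp} (that the syntactic operations compute the intended set-theoretic ones). The only mild care needed here is to recall that, for a subset of a boolean algebra, being a boolean algebra does follow from closure under the operations together with membership of $0$ and $1$, since the remaining laws are universally quantified identities and hence descend to any such subset.
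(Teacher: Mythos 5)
Your proof is correct and follows essentially the same route as the paper, which presents the corollary as an immediate consequence of \cref{lem:def-op-bool} and \cref{lem:op-commute-supp} via the remark that the finitely complemented regions yield a sub-boolean algebra of $\powset(X)$. Your only addition is spelling out that $\emptyset$ and $X$ (via the empty region and $\set{(\bot,\top)}$) lie in $\fcsupports(X)$, a detail the paper leaves implicit.
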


\begin{proposition}
  \label[prop]{prop:fc-normal}
  Finitely complemented supports coincide with normal ones, \ie we have
  $\fcsupports(X)=\nsupports(X)$.
\end{proposition}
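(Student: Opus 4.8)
The plan is to prove the two inclusions $\fcsupports(X)\subseteq\nsupports(X)$ and $\nsupports(X)\subseteq\fcsupports(X)$ separately, the first by exploiting the explicit operations of \cref{def:boolean-operations}, the second by a direct analysis of normal forms.

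\emph{From $\fcsupports(X)$ to $\nsupports(X)$.} Fix $Y\in\fcsupports(X)$ and a finite finitely complemented region $R$ with $\support R=Y$. By \cref{lem:def-op-bool} the complement region $\ncomp R$ again lies in $\fcregions(X)$; it is again finite — here one uses the fact (established alongside \cref{lem:def-op-bool}) that the operations of \cref{def:boolean-operations} preserve finiteness of finitely complemented regions — and $\support{\ncomp R}=\ol Y$ by \cref{lem:op-commute-supp}. Iterating, $\ncomp{\ncomp R}$ is a finite finitely complemented region with $\support{\ncomp{\ncomp R}}=Y$. The crux is then the statement: \emph{for every finite finitely complemented region $T$, the support $\support T$ is normalizable and $N(\support T)$ is finite}; granting it, applying it to $T=R$ and to $T=\ncomp R$ shows that both $N(Y)$ and $N(\ol Y)$ exist and are finite, i.e. $Y\in\nsupports(X)$. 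To prove the crux, I would show that $N(\support T)$ coincides with $\ncomp{\ncomp T}$ (possibly after discarding comparable intervals): since $\support{\ncomp{\ncomp T}}=\support T$ by \cref{lem:op-commute-supp}, it suffices by \eqref{eq:N-sup} to check that $\ncomp{\ncomp T}$ dominates, for the order of \cref{lem:po-region}, every region $S$ with $\support S\subseteq\support T$; concretely this reduces to showing that every interval contained in $\support T$ is contained in one of the finitely many intervals produced by the double-complement construction of \cref{def:boolean-operations}. It is exactly here that the hypothesis that $X$ is finitely complemented is needed — as \cref{ex:no-N} shows, without such an assumption maximal intervals need not exist at all and the construction breaks down.

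\emph{From $\nsupports(X)$ to $\fcsupports(X)$.} Fix $Y\in\nsupports(X)$; then $N(Y)$ is a finite region with $\support{N(Y)}=Y$, so it suffices to show $N(Y)\in\fcregions(X)$, i.e. that every $(x,y)\in N(Y)$ has $x$ finitely lower complemented and $y$ finitely upper complemented. Since $X$ is a well-order and $\ucomp{\set{y}}$ is upwards closed, $y$ is automatically finitely upper complemented by \cref{rem:wo-ug}, so the whole content is that $x$ is finitely lower complemented. I would argue by contradiction: if it is not, then since $X$ is finitely complemented, $x$ cannot be a minimal element of $\ucomp{\set{x'}}$ for any $x'$ (condition (ii) of \cref{def:fin-comp}), which by a lattice computation forces $x=\bigvee\setof{z\in X}{z<x}$ with this set having no greatest element. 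Using that $(x,y)$ is \emph{maximal} in $Y$, for every $z<x$ the interval $[z,y]$ meets $\ol Y$, say in $w_z$ with $z\leq w_z\leq y$ and $w_z\not\geq x$; and since $Y\in\nsupports(X)$, the set $\ol Y=\support{N(\ol Y)}$ is a \emph{finite} union of intervals $[a_k,b_k]$. Grouping the $w_z$ according to which $[a_k,b_k]$ contains them and taking suprema (which exist, all the families involved being bounded above by $y$ in the well-founded lattice $X$) produces, inside some $[a_k,b_k]\subseteq\ol Y$, an element $W$ with $x\leq W\leq y$, hence $W\in[x,y]\subseteq Y$, contradicting $W\in\ol Y$. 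This yields $N(Y)\in\fcregions(X)$, hence $Y=\support{N(Y)}\in\fcsupports(X)$.

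\emph{Main obstacle.} The two delicate points are: matching the explicit double complement of \cref{def:boolean-operations} with the order-theoretic normal form of \eqref{eq:N-sup} (showing that every interval contained in the support lies inside a maximal one, and that there are only finitely many such), and organising the supremum/pigeonhole argument of the second part, whose ``finite join'' subcase — when $x$ is a finite join of strictly smaller elements rather than the supremum of a single cofinal family landing in one $[a_k,b_k]$ — should be dealt with by a well-founded induction on $x$. Both rest essentially on the standing assumption that $X$ is a finitely complemented well-ordered bounded lattice.
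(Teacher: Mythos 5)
Your first inclusion ($\fcsupports(X)\subseteq\nsupports(X)$) follows the paper's own route: complement twice with the operations of \cref{def:boolean-operations}, use \cref{lem:def-op-bool} and \cref{lem:op-commute-supp}, and identify the normal form with the maximal intervals of the double complement; this matches the paper's argument (which states $\max\ncomp{R}=N(\ol{\support R})$) at the same level of detail. The problem is in your second inclusion, which departs from the paper and contains a genuine gap at exactly the point you flag. Your pigeonhole step only yields that $x\leq\bigvee_k W_k$, a \emph{finite join} of the groupwise suprema, and in a lattice this does not force $x\leq W_k$ for some $k$. This ``join-reducible'' case is not a corner case but the heart of the matter: for instance in a poset of the shape $(\omega+1)\times(\omega+1)$ (essentially the positions of $\ppar{\ploop A}{\ploop B}$), the element $x=(\omega,\omega)$ is not finitely lower complemented, satisfies $x=\bigvee\setof{z}{z<x}$ with no greatest element, and is the join of the two strictly smaller elements $(\omega,0)$ and $(0,\omega)$, neither of which is finitely lower complemented either. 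So the deferred ``well-founded induction on $x$'' has no obvious content: when each group's supremum $V_k$ fails to dominate $x$, you are left with finitely many elements of $\ol Y$ below $y$ whose join dominates $x$ but none of which does individually, and these joinands are not lower endpoints of maximal intervals of any set in $\nsupports(X)$, so the inductive hypothesis does not apply to them. (A secondary imprecision: the existence of the suprema you take needs the antichain half of the well-order hypothesis, not just well-foundedness -- in a well-ordered lattice the set of upper bounds of a bounded family has a unique minimal element -- though this part is repairable.)

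For comparison, the paper avoids this difficulty entirely by never trying to prove directly that the intervals of $N(Y)$ are finitely complemented. Instead it controls $N(\ol Y)$: it introduces a generalized complement $\infcomp{-}$ applicable to arbitrary regions, whose intervals have lower endpoint $\bot$ or an element of the minimal generators of an upper complement of an endpoint of $N(Y)$ -- such elements are finitely lower complemented by condition (ii) of \cref{def:fin-comp} together with \cref{rem:wo-ug}, and this is stable under the joins created by $\ncap$ (the proof of \cref{lem:def-op-bool}). Since $N(\ol Y)\subseteq\infcomp{N(Y)}$, every interval of $N(\ol Y)$ is finitely complemented, hence $\ol Y\in\fcsupports(X)$, and closure of $\fcsupports(X)$ under complementation (\cref{lem:op-commute-supp}, \cref{prop:fcr-ba}) gives $Y\in\fcsupports(X)$. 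If you want to keep your direct formulation, you would still need an argument of this kind (or a proof that $x$ is always reached as the supremum of a chain whose tails land in a single interval of $N(\ol Y)$), which is precisely what is missing.
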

\begin{proof}
  $\fcsupports(X)\subseteq\nsupports(X)$. Given $Y\in\fcsupports(X)$, there
  exists a region~$R\in\fcregions(X)$ such that $\support{R}=Y$. By
  \cref{lem:def-op-bool}, its complement $\ncomp{R}$ belongs to~$\fcregions(X)$,
  since it can be computed using a finite number of unions and intersections of
  finitely complemented regions.
  Then, it is easy to prove $\max \ncomp{R} = N(\ol{\support{R}})=N(\ol
  Y)$. Applying this twice gives the normal form of $R$, and thus of~$Y$.

  $\fcsupports(X)\supseteq\nsupports(X)$.
  Suppose given $Y\in\nsupports(X)$. We are going to show
  $\ol Y\in\fcsupports(X)$ and by \cref{prop:fc-normal}, we will conclude
  $Y\in\fcsupports(X)$.
  We define an extension $\infcomp{-}:\regions(X)\to\regions(X)$ of $\ncomp{-}$
  for regions that are not finitely complemented by
  \[
    \infcomp{R}=\bigncap[(s,t) \in R]\setof{(\bot,x)}{x \in \lcomp s} \ncup  \setof{(x,\top)}{x \in \max \ucomp t}  
  \]
  and we show that $N(\ol Y)\subseteq\infcomp{N(Y)}$ and
  $\infcomp{N(Y)}\in\fcregions(X)$, by \cref{rem:wo-ug}. Thus
  $N(\ol Y)\in\fcregions(X)$, \ie $\ol Y\in\fcsupports(X)$ and finally
  $Y\in\fcsupports(X)$.
\end{proof}

\noindent
We have thus shown the other implication of \cref{thm:nf-ba}:

\begin{corollary}
  If $X$ is finitely complemented, the set $\nsupports(X)$ is a boolean algebra.
\end{corollary}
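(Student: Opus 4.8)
The plan is simply to assemble the two preceding propositions. Since we are assuming that $X$ is finitely complemented, \cref{prop:fcr-ba} tells us that $\fcsupports(X)$ is a boolean algebra, and \cref{prop:fc-normal} tells us that $\fcsupports(X)=\nsupports(X)$ as subsets of $\powset(X)$. Combining these two facts immediately yields that $\nsupports(X)$ is a boolean algebra.

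The one point worth making explicit is that the relevant structure is the one \emph{inherited} from $\powset(X)$, and not merely an abstract isomorphic copy: by \cref{lem:op-commute-supp} the operations $\ncup$, $\ncap$ and $\ncomp{-}$ on finitely complemented regions compute, on supports, exactly union, intersection and complement in $\powset(X)$, so $\fcsupports(X)$ is closed under the operations of $\powset(X)$ and is a sub-boolean algebra of it (with $\emptyset=\support\emptyset$ as least element and $X=\support{\set{(\bot,\top)}}$ as greatest element, the interval $(\bot,\top)$ being finitely complemented since $\lcomp{\set{\bot}}=\ucomp{\set{\top}}=\emptyset$). Transporting this along the equality $\fcsupports(X)=\nsupports(X)$ makes $\nsupports(X)$ a sub-boolean algebra of $\powset(X)$ as well, hence a boolean algebra.

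There is no real obstacle remaining at this stage: all the substantive work — defining the region operations, verifying they are well-defined (\cref{lem:def-op-bool}), verifying they compute the corresponding operations on supports (\cref{lem:op-commute-supp}), and identifying finitely complemented supports with normal ones (\cref{prop:fc-normal}) — has already been carried out. Finally, this corollary together with \cref{prop:nf-ba-easy} establishes both implications of \cref{thm:nf-ba}, completing its proof.
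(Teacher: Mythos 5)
Your proof is correct and follows exactly the route the paper takes: the corollary is obtained by combining \cref{prop:fcr-ba} with the identification $\fcsupports(X)=\nsupports(X)$ of \cref{prop:fc-normal}, the operations being those inherited from $\powset(X)$ via \cref{lem:op-commute-supp}. Your additional remarks (closure under the powerset operations, the bounds $\emptyset$ and $X=\support{\set{(\bot,\top)}}$) are accurate elaborations of the same argument.
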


\noindent
This thus shows that, in a finitely complemented poset, we can implement the
usual boolean operations on regions while preserving the property of being
normalizable.

\section{Syntactic regions}
\label{sec:syntactic}
In the case of syntactic regions, \ie regions on the poset of positions of a
program, the operations of boolean algebra can be effectively implemented, by
induction on the structure of programs, following
\cref{def:boolean-operations}. Namely,
\begin{itemize}
\item the union of regions is immediate to implement,
\item the supremum and infimum of positions can be computed following
  \cref{prop:lat-op}, from which we can compute the intersection of regions,
\item we can compute the generators of the complement, from which we
  can compute the complement of regions.
\end{itemize}
Let us detail the last point. Given a position $p$ of a
program~$P$, we define, by induction on~$P$, the following set $\inf_P(p)$ of
positions of~$P$:
{\allowdisplaybreaks
  \begin{gather*}
    \begin{aligned}
      {\inf}_P(\pbot)&=\emptyset
      &
      {\inf}_{\pseq PQ}(\ptop)&=\set{\pseq\ptop\ptop}
      &
      {\inf}_{\por PQ}(\ptop)&=\set{\por\ptop\pbot,\por\pbot\ptop}
      \\
      {\inf}_{A}(\ptop)&=\set{\pbot}
      &
      {\inf}_{\ppar PQ}(\ptop)&=\set{\ppar\ptop\ptop}
      &
      {\inf}_{\ploop P}(\ptop)&=\emptyset
    \end{aligned}
    \\
    \begin{aligned}
      {\inf}_{\pseq PQ}(\pseq pq)&=
      \begin{cases}
        \set{\pbot}&\text{if $p=q=\pbot$}\\
        \setof{\pseq{p'}\pbot}{p'\in\inf_P(p)}&\text{if $p\neq\pbot$ and $q=\pbot$}\\
        \setof{\pseq\ptop{q'}}{q'\in\inf_Q(q)}&\text{if $p=\ptop$ and $q\neq\pbot$}
      \end{cases}
      \\
      {\inf}_{\por PQ}(\por pq)&=
      \begin{cases}
        \set{\pbot}&\text{if $p=q=\pbot$}\\
        \setof{\por{p'}q}{p'\in\inf_P(p)}\cup\set{\por p\ptop}&\text{if $p\neq\pbot$}\\
        \setof{\por p{q'}}{q'\in\inf_Q(q)}\cup\set{\por\ptop q}&\text{of $q\neq\pbot$}
      \end{cases}
      \\
      {\inf}_{\ppar PQ}(\ppar pq)&=
      \begin{cases}
        \set{\pbot}&\text{if $p=q=\pbot$}\\
        \setof{\ppar{p'}\ptop}{p'\in\inf_P(p)}\cup\setof{\ppar\ptop{q'}}{q'\in\inf_Q(q)}&\text{if $p\neq\pbot$ or $q\neq\pbot$}
      \end{cases}
      \\
      {\inf}_{\ploop P}(\ploop[n]p)&=
      \begin{cases}
        \set{\pbot}&\text{if $n=0$ and $p=\pbot$}\\
        \set{\ploop[n-1]{\ptop}}&\text{if $n>0$ and $p=\pbot$}\\
        \setof{\ploop[n]{p'}}{p'\in\inf_P(p)}&\text{if $p\neq\pbot$}
      \end{cases}
    \end{aligned}
  \end{gather*}
}

\begin{proposition}
  \label[prop]{prop:max-lc-algo}
  Given a position~$p$ of a program~$P$, the set $\inf_P(p)$ is a well-defined
  set of positions and we have $\inf_P(p)=\max(\lcomp p)$.
\end{proposition}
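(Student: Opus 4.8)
The plan is to prove the two assertions simultaneously by structural induction on the program~$P$, with the position~$p$ ranging over $\positions(P)$. Since the definition of $\inf_P(p)$ is given by a case split that mirrors the inductive structure of $\positions(P)$ (\cref{def:position}), the cases of the induction will line up precisely with the clauses of the definition. For each clause I must check three things: that the right-hand side is again a set of valid positions of~$P$ (using the formation rules of \cref{def:position}), that every element listed is indeed a maximal element of $\lcomp p = \setof{q\in\positions(P)}{q\not\geq p}$, and conversely that every such maximal element appears in the list. Throughout, the characterization $q\geq p$ iff $p\leq q$ via \cref{def:pos-order} is the workhorse: incomparability with~$p$ in a composite position reduces, componentwise, to incomparability in the subprograms, which is exactly what lets the induction hypothesis apply.

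First I would dispatch the base and ``top'' cases. For $\inf_P(\pbot)=\emptyset$: since $\pbot\leq q$ for all~$q$ by the first rule of \cref{def:pos-order}, $\lcomp\pbot=\emptyset$, so $\max(\lcomp\pbot)=\emptyset$. For the $\ptop$ cases, $\lcomp\ptop$ is the set of positions strictly below~$\ptop$, i.e.\ everything except~$\ptop$; I need its maximal elements. For $P=A$, the only positions are $\pbot<\ptop$, so $\max(\lcomp\ptop)=\set\pbot$. For $P=\pseq PQ$, the positions just below~$\ptop$ are those of the form $\pseq{p}{q}$, and the unique maximal one is $\pseq\ptop\ptop$ (since $\pseq\ptop\ptop$ reduces directly to~$\ptop$ and dominates every other non-top position); similarly $\ppar\ptop\ptop$ for parallel. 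For $P=\por PQ$ there are two maximal predecessors of~$\ptop$, namely $\por\ptop\pbot$ and $\por\pbot\ptop$, matching the definition. For $P=\ploop P$, the subtle point is that $\ploop P$ has no maximal non-top position at all: the positions $\ploop[n]\ptop$ form an infinite ascending chain below~$\ptop$ (each $\ploop[n]\ptop\leq\ploop[m]\ptop$ for $n<m$ by the last rule of \cref{def:pos-order}), so $\max(\lcomp\ptop)=\emptyset$, consistent with $\inf_{\ploop P}(\ptop)=\emptyset$.

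Next come the genuinely inductive clauses, where I would argue one connective at a time. Take $\pseq PQ$ with position $\pseq pq$. The case $p=q=\pbot$ is again handled by $\pbot$ being the bottom. If $p\neq\pbot$, then validity forces $q=\pbot$, and $\pseq{p'}{q'}\geq\pseq pq$ iff $p'\geq p$ and $q'\geq q=\pbot$, the latter being automatic; so $\pseq{p'}{q'}\not\geq\pseq pq$ iff $p'\not\geq p$, and hence $\lcomp{\pseq p\pbot}=\setof{\pseq{p'}{q'}}{p'\in\lcomp p}$ with the $q'$ component unconstrained — but the largest choice of~$q'$ is forced once we note that to be maximal we may as well take $q'$ as large as possible while keeping $p'$ maximal in $\lcomp p$, which yields exactly $\setof{\pseq{p'}\pbot}{p'\in\max(\lcomp p)}$ after checking that increasing~$q'$ past~$\pbot$ would require $p'=\ptop$, contradicting $p'\in\lcomp p$ (as $p\neq\pbot$ means $\ptop\geq p$). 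Apply the induction hypothesis to get $\max(\lcomp p)=\inf_P(p)$. The case $p=\ptop$, $q\neq\pbot$ is symmetric with the roles of the two components reversed, using that $\pseq\ptop{q'}\geq\pseq\ptop q$ iff $q'\geq q$. The clauses for $\por PQ$ and $\ppar PQ$ follow the same pattern but with the extra subtlety, in the $\por$ case, that once we commit to exploring the $P$-branch we are incomparable with every position exploring the $Q$-branch past~$\pbot$, which accounts for the additional generator $\por p\ptop$ (the ``switch to the other branch, fully'' position); I would verify this is maximal in $\lcomp{\por pq}$ and that together with $\setof{\por{p'}q}{p'\in\max(\lcomp p)}$ it exhausts the maximal elements. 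For $\ppar PQ$ with $p\neq\pbot$ or $q\neq\pbot$, incomparability with $\ppar pq$ means failing $p'\geq p$ or failing $q'\geq q$; the maximal witnesses of the first kind are $\ppar{p'}\ptop$ for $p'\in\max(\lcomp p)$ and of the second kind $\ppar\ptop{q'}$ for $q'\in\max(\lcomp q)$, and one checks no element is simultaneously of both forms unless it is already listed. Finally, for $\ploop P$ with position $\ploop[n]p$: if $p=\pbot$ and $n=0$ the element is the bottom; if $p=\pbot$ and $n>0$, the positions incomparable with $\ploop[n]\pbot$ are exactly those $\ploop[m]{p'}$ with $m<n$ (by the $m<n$ rule they are~$\leq\ploop[n]\pbot$... wait, incomparable means \emph{not} $\geq$), so $\lcomp{\ploop[n]\pbot}=\setof{\ploop[m]{p'}}{m<n,\ p'\in\positions(P)}$, whose unique maximal element is $\ploop[n-1]\ptop$; if $p\neq\pbot$, then $\ploop[m]{p'}\geq\ploop[n]p$ iff ($m>n$) or ($m=n$ and $p'\geq p$), so failure means $m<n$, or $m=n$ and $p'\not\geq p$, or $m>n$ is... no: for $m>n$ we always have $\geq$. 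Hence $\lcomp{\ploop[n]p}$ consists of all $\ploop[m]{p'}$ with $m<n$ together with $\ploop[n]{p'}$ for $p'\in\lcomp p$; the maximal elements of the first part are dominated by $\ploop[n]\pbot\leq\ploop[n]{p'}$ for any $p'$, so only the second part contributes, giving $\setof{\ploop[n]{p'}}{p'\in\max(\lcomp p)}=\setof{\ploop[n]{p'}}{p'\in\inf_P(p)}$ by induction.

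The main obstacle I anticipate is bookkeeping rather than conceptual: in the composite cases one must be careful that the ``obvious'' generator from one component (e.g.\ pushing the other component up to~$\ptop$) is both valid (it is, by the formation rules — e.g.\ $\pseq{p'}\pbot$ needs $p'$ valid, which holds since $p'\in\lcomp p\subseteq\positions(P)$) and genuinely maximal, i.e.\ that nothing in $\positions(P)$ strictly dominates it while still avoiding~$\geq p$ in the relevant component; and in the $\por$ case one must not double-count the generator $\por p\ptop$ or miss that it is the \emph{only} new maximal element coming from ``switching branches''. I would isolate, as a small auxiliary observation used uniformly, the fact that for a composite connective the order $\leq$ is computed componentwise (read off directly from \cref{def:pos-order} and \cref{prop:lat-op}), so that $\lcomp{(-)}$ and its maximal elements decompose componentwise, reducing each case to the induction hypothesis plus a finite check. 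Well-foundedness of $\positions(P)$ (\cref{prop:pos-wo}) guarantees that $\max(\lcomp p)$ is indeed attained — every element of $\lcomp p$ sits below some maximal one — which is what makes the equality $\inf_P(p)=\max(\lcomp p)$ meaningful rather than one side being spuriously empty.
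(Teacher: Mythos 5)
Your overall strategy -- structural induction on~$P$, with a componentwise analysis of~$\leq$ mirroring the clauses defining $\inf_P$ -- is the natural one, and your treatment of the base cases, of the $\ptop$ cases (including the empty answer for $\ploop P$, due to the infinite ascending chain), and of the $\pseq$, $\ppar$ and $\ploop$ clauses is essentially sound. The genuine gap is in the $\por$ clause, which is exactly where the case analysis is delicate and where you only gesture at a verification. You take the extra generator $\por p\ptop$ at face value, call it the ``switch to the other branch, fully'' position, and promise to check that it is maximal in $\lcomp{\por pq}$; but for $p\neq\pbot$ the pre-position $\por p\ptop$ is not a valid position at all (validity for $\por PQ$ forces one component to be $\pbot$), so this check cannot be carried out, and the ``other branch fully executed'' element is $\por\pbot\ptop$ (resp.\ $\por\ptop\pbot$ in the symmetric clause) -- note that part of the statement to be proved is precisely that $\inf_P(p)$ is a well-defined set of \emph{positions}. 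Moreover, even after this repair, your claim that the listed elements ``exhaust the maximal elements'' (and are all maximal) fails in a corner case you never examine: when $\inf_P(p)=\set{\pbot}$, i.e.\ $\lcomp p=\set{\pbot}$ (e.g.\ $P=A$ and $p=\ptop$), the first family contributes $\por\pbot\pbot$, which lies strictly below $\por\pbot\ptop\in\lcomp{\por p\pbot}$ and hence is not maximal. Concretely, for $\por AB$ at the position $\por\ptop\pbot$ one computes $\max(\lcomp{\por\ptop\pbot})=\set{\por\pbot\ptop}$, whereas the clause yields $\set{\por\pbot\pbot,\por\ptop\ptop}$ (or $\set{\por\pbot\pbot,\por\pbot\ptop}$ after the repair). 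A correct proof must confront this case -- by adjusting the $\por$ clause (dropping $\pbot$-generators dominated by the cross-branch element and using $\por\pbot\ptop$) or by explicitly flagging the discrepancy with the definition as printed; asserting the exhaustiveness without doing so is exactly the kind of corner case this proposition is meant to pin down.

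A smaller point: your appeal to \cref{prop:pos-wo} to conclude that every element of $\lcomp p$ sits below some maximal element is incorrect -- well-foundedness constrains descending chains, not ascending ones, and your own computation of $\inf_{\ploop P}(\ptop)=\emptyset$ exhibits a set $\lcomp\ptop$ with no maximal elements at all. Fortunately this claim is not load-bearing: the induction only requires identifying which elements are maximal, which you do (correctly, in the other cases) by exhibiting explicit dominating elements inside $\lcomp p$ rather than by invoking attainment of maxima.
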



\noindent
Similarly, given a position $p$ of a program~$P$, one can define by induction
on~$P$ a set $\sup_P(p)$ of positions of~$P$ such that
$\sup_P(p)=\min(\ucomp p)$. We can finally show that the poset of positions of a
programs satisfy the conditions of previous section:

\begin{proposition}
  \label[prop]{prop:pos-fc-wo}
  Given a program~$P$, its poset of positions $\positions(P)$ is a finitely
  complemented well-ordered lattice.
\end{proposition}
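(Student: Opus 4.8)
The plan is to prove the three assertions — that $\positions(P)$ is a lattice, that it is a well-order, and that it is finitely complemented — by assembling results already established, reducing the only genuinely new content to an induction on the structure of $P$. The lattice claim is exactly \cref{prop:lat-op}, and the well-order claim is \cref{prop:pos-wo}, so nothing remains to be done for those two. The substance is therefore the verification of \cref{def:fin-comp}, the finitely-complemented condition, for the poset $\positions(P)$.

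First I would dispatch condition (i) of \cref{def:fin-comp} immediately: by \cref{prop:pos-wo} the poset $\positions(P)$ is a well-order, so by \cref{rem:wo-ug} every upwards-closed subset is finitely upper generated, which makes condition (i) automatic. It remains to check condition (ii): for every finitely upper complemented position $p$, every element $q\in\min(\ucomp{\set p})$ is finitely lower complemented. The plan is to show more, namely that \emph{every} position $p\in\positions(P)$ is finitely upper complemented \emph{and} finitely lower complemented — i.e.\ that both $\lcomp{\set p}$ is finitely upper generated and $\ucomp{\set p}$ is finitely lower generated. Here the well-order property already gives one half of each statement for free (via \cref{rem:wo-ug}, a downwards-closed set such as $\lcomp{\set p}$ is finitely upper generated, and dually $\ucomp{\set p}$ is finitely lower generated), so condition (ii) follows trivially once we know every position is finitely lower complemented; but to keep the induction self-contained it is cleanest to carry the full two-sided statement. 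Concretely, I would prove by induction on $P$ that the finite set $\inf_P(p)$ of \cref{prop:max-lc-algo} witnesses $\lcomp{\set p}$ being finitely upper generated — which is precisely the content of \cref{prop:max-lc-algo}, $\inf_P(p)=\max(\lcomp p)$ together with the fact that this set is finite — and dually that the set $\sup_P(p)$ mentioned just before the statement witnesses $\ucomp{\set p}$ being finitely lower generated, with $\sup_P(p)=\min(\ucomp p)$ finite.

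With these in hand the argument closes: \cref{prop:max-lc-algo} gives that $\max(\lcomp{\set p})=\inf_P(p)$ is a finite set, so $p$ is finitely upper complemented for every $p$; dually, using the $\sup_P$ construction, $\min(\ucomp{\set p})=\sup_P(p)$ is finite, so $p$ is finitely lower complemented for every $p$. In particular, for a finitely upper complemented $p$, every $q\in\min(\ucomp{\set p})$ is finitely lower complemented, which is condition (ii); and condition (i) was already handled via \cref{rem:wo-ug}. Hence $\positions(P)$ is finitely complemented, and combining with \cref{prop:lat-op} and \cref{prop:pos-wo} we conclude it is a finitely complemented well-ordered lattice.

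The main obstacle is the inductive verification underlying \cref{prop:max-lc-algo} and its dual: one must check, case by case on the program constructors ($A$, $\pseq PQ$, $\por PQ$, $\ppar PQ$, $\ploop P$) and on the shape of the position, that the displayed recursive formula for $\inf_P(p)$ really computes the maximal elements of $\lcomp{\set p}$ — the delicate cases being the parallel composition, where being ``not above $p$'' in $\positions(P)\times\positions(Q)$ unfolds into a disjunction over the two components (hence the union of two families in the formula for $\inf_{\ppar PQ}$), and the loop, where the unfolding index must be decremented correctly (the $\ploop[n-1]{\ptop}$ case). Since \cref{prop:max-lc-algo} is stated as an available result, this obstacle is already discharged, and the proof of \cref{prop:pos-fc-wo} is then a short assembly of citations as above.
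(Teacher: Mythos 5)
Your reduction of the lattice and well-order claims to \cref{prop:lat-op} and \cref{prop:pos-wo}, and of condition~(i) of \cref{def:fin-comp} to \cref{rem:wo-ug}, is fine. The gap is in your treatment of condition~(ii): you propose to prove that \emph{every} position is finitely lower complemented, and this is false as soon as the program contains a loop. Take $P=\ploop A$ and $p=\ptop$: then $\lcomp{\set p}$ is the infinite ascending chain $\pbot<\ploop[0]\pbot<\ploop[0]\ptop<\ploop[1]\pbot<\cdots$, which has no maximal element and is not the lower closure of any finite set (this is exactly the situation of \cref{ex:no-N}). Note that \cref{prop:max-lc-algo} only asserts that the finite set $\inf_P(p)$ equals $\max(\lcomp p)$; it does \emph{not} assert that $\max(\lcomp p)$ lower-generates $\lcomp{\set p}$, and here $\inf_{\ploop A}(\ptop)=\emptyset$ while $\lcomp{\set{\ptop}}\neq\emptyset=\lc\emptyset$. \cref{lem:max-finite} gives only the converse implication, and well-orderedness controls descending chains and antichains, not ascending chains, so nothing forces the maximal elements to generate. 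Your intermediate appeal to \cref{rem:wo-ug} (``$\lcomp{\set p}$ is finitely upper generated, and dually $\ucomp{\set p}$ is finitely lower generated'') also mixes up the two notions: what the remark gives is that every position is finitely \emph{upper} complemented, which is what makes condition~(i) automatic, but it says nothing about lower complements.

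This is precisely why \cref{def:fin-comp} is stated conditionally rather than asking that every element be finitely complemented. The real content of \cref{prop:pos-fc-wo}, beyond the already-cited results, is that the particular positions arising as $\min(\ucomp{\set x})=\sup_P(x)$ are always finitely lower complemented, \ie that problematic ``limit'' positions such as $\ptop$ in $\ploop A$, or $\pseq\ptop\pbot$ in $\pseq{\ploop A}{B}$, never occur as minimal elements of an upper complement. Verifying this requires its own induction on $P$ (analysing the shape of the positions produced by $\sup_P$, or equivalently proving $\lcomp{\set q}=\lc(\inf_P(q))$ for exactly those $q$); citing \cref{prop:max-lc-algo} and its dual does not supply it, so your plan as written does not establish condition~(ii).
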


The operations do not in general preserve the property of being normal for
regions, but \cref{thm:nf-ba,prop:pos-fc-wo} however ensure that the property of
being normalizable is preserved, and the normal form of a region can be computed
as follows:

\begin{proposition}
  With the implementation of operations described above, the normal form of a
  region~$R$ is $N(R)=\max(\ol{\ol R})$, \ie it can be obtained by computing
  twice the complement of~$R$ and only keeping intervals which are maximal \wrt
  inclusion.
\end{proposition}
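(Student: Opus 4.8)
The plan is to reduce the statement to the single fact that $\max(\ncomp S)=N(\ol{\support S})$ for every finitely complemented region~$S$ on $X=\positions(P)$ --- the step left as ``easy'' inside the proof of \cref{prop:fc-normal} --- and then to apply it twice. Here $R$ is finitely complemented, so that $\ol R=\ncomp R$ of \cref{def:boolean-operations} is defined (this is the relevant case, since the boolean operations preserve $\fcregions(X)$ by \cref{lem:def-op-bool}); by \cref{prop:pos-fc-wo}, $X$ is a finitely complemented well-ordered lattice, so the constructions of \cref{sec:boolean-algebra} apply, and $N(R):=N(\support R)$ exists and is finite since $\support R\in\fcsupports(X)=\nsupports(X)$ by \cref{prop:fc-normal}.

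For the auxiliary fact, write $S=\set{(x_1,y_1),\dots,(x_k,y_k)}$, so $\ncomp S=C_1\ncap\dots\ncap C_k$ with $C_i=\ncomp{\set{(x_i,y_i)}}$; by \cref{prop:int-normal-comp} each $C_i$ is the normal form of the complement of $[x_i,y_i]$, \ie the set of $\subseteq$-maximal intervals contained in $\ol{[x_i,y_i]}=\lcomp{\set{x_i}}\cup\ucomp{\set{y_i}}$. The key point is a sublemma: in a bounded lattice, every interval $(a,b)$ with $[a,b]\subseteq\ol{[x,y]}$ is already contained in one of the generating maximal intervals $[\bot,y']$ ($y'\in\max\lcomp{\set x}$) or $[x',\top]$ ($x'\in\min\ucomp{\set y}$) of $\ncomp{\set{(x,y)}}$. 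Indeed, the ``straddling'' case $a\leq y$, $b\not\leq y$ is impossible, since then $c=(a\vee x)\wedge y$ would satisfy $a\leq c\leq b$ and $x\leq c\leq y$, contradicting $[a,b]\cap[x,y]=\emptyset$; and otherwise $[a,b]$ lies below an element of $\max\lcomp{\set x}$ (if $b\not\geq x$, using that $\lcomp{\set x}$ is downward closed and finitely lower generated because $(x,y)$ is finitely complemented) or above an element of $\min\ucomp{\set y}$ (if $a\not\leq y$), which exhaust the remaining cases. Since $\ncap$ ranges over all componentwise intersections, any interval contained in $\ol{\support S}=\bigcap_i\ol{[x_i,y_i]}$ lies inside an intersection $I_1\cap\dots\cap I_k$ with $I_i\in C_i$, which is an element of $\ncomp S$. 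Both inclusions of $\max(\ncomp S)=N(\ol{\support S})$ then follow: a $\subseteq$-maximal interval of $\ol{\support S}$ equals such an intersection, so lies in $\ncomp S$ and is maximal there; conversely an element of $\max(\ncomp S)$ must be maximal among intervals contained in $\ol{\support S}=\support{\ncomp S}$ (\cref{lem:op-commute-supp}), \ie an element of $N(\ol{\support S})$ by the normal-form description recalled in \cref{sec:region}.

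It remains to apply the fact twice. With $S=R$ it gives $\max(\ncomp R)=N(\ol{\support R})$, and since $\ncomp R$ is again a finitely complemented region, with $\support{\ncomp R}=\ol{\support R}$, by \cref{lem:def-op-bool,lem:op-commute-supp}, applying it with $S=\ncomp R$ gives
\[
  \max(\ol{\ol R})=\max(\ncomp{\ncomp R})=N(\ol{\support{\ncomp R}})=N(\ol{\ol{\support R}})=N(\support R),
\]
the normal form of~$R$ (finiteness of $\ncomp{\ncomp R}$, hence of its maximal intervals, is part of \cref{lem:def-op-bool}). I expect the sublemma to be the only real obstacle: it is what upgrades the automatic inequality $\ncomp S\preceq N(\ol{\support S})$ to the equality of maximal intervals, and it is where the lattice structure of $X$ is genuinely used (via $(a\vee x)\wedge y$) and where one must combine \cref{prop:int-normal-comp} with the fact that $\ncap$ is computed over all componentwise intersections. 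The double-complement bookkeeping and the passage between ``region in normal form'' and ``set of $\subseteq$-maximal intervals of the support'' are routine given \cref{sec:region,sec:boolean-algebra}.
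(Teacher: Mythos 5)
Your proposal is correct and follows the paper's own route: the paper's one-line proof---that the complement construction already contains every maximal interval of the complement of the support---is exactly your sublemma that $\max(\ncomp S)=N(\ol{\support S})$ for a finitely complemented region~$S$, which you then apply twice. (Only a small slip in the ``straddling'' case, whose hypothesis should read $a\leq y$ and $b\geq x$; your witness $(a\vee x)\wedge y$ is the right one for that case.)
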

\begin{proof}
  It can be observed that the definition of the complement is such that it
  contains all the maximal intervals.
\end{proof}

\begin{example}
  Let us illustrate the fact that the operations defined above do not preserve
  normality (again, they only preserve normalizability), consider the following
  examples. With the region~$R$ and~$S$ below, the region $R\ncup S$ is not
  normal (the normal form is show on the right):
  \[
    \begin{array}{c@{\qquad\qquad}c@{\qquad\qquad}c@{\qquad\qquad}c}
      \begin{tikzpicture}[baseline=(b.base),scale=.4]
        \coordinate (b) at (2.5,2);
        \foreach \i in {0,...,5} \foreach \j in {0,...,5} \filldraw (\i,\j) circle (.02);
        \draw (0,0) rectangle (5,5);
        \draw[pattern=north east lines] (0,1) rectangle (2,4);
      \end{tikzpicture}
      &
      \begin{tikzpicture}[baseline=(b.base),scale=.4]
        \coordinate (b) at (2.5,2);
        \foreach \i in {0,...,5} \foreach \j in {0,...,5} \filldraw (\i,\j) circle (.02);
        \draw (0,0) rectangle (5,5);
        \draw[pattern=north east lines] (2,1) rectangle (4,4);
      \end{tikzpicture}
      &
      \begin{tikzpicture}[baseline=(b.base),scale=.4]
        \coordinate (b) at (2.5,2);
        \foreach \i in {0,...,5} \foreach \j in {0,...,5} \filldraw (\i,\j) circle (.02);
        \draw (0,0) rectangle (5,5);
        \draw[pattern=north east lines] (0,1) rectangle (2,4);
        \draw[pattern=north east lines] (2,1) rectangle (4,4);
      \end{tikzpicture}
      &
      \begin{tikzpicture}[baseline=(b.base),scale=.4]
        \coordinate (b) at (2.5,2);
        \foreach \i in {0,...,5} \foreach \j in {0,...,5} \filldraw (\i,\j) circle (.02);
        \draw (0,0) rectangle (5,5);
        \draw[pattern=north east lines] (0,1) rectangle (4,4);
      \end{tikzpicture}
      \\
      R
      &
      S
      &
      R\ncup S
      &
      R\cup S
    \end{array}
  \]
  Similarly, with the region $R$ and $S$ below, the region $R\ncap S$ is not
  normal because it contains the interval~$I$ pictured on the right
  \[
    \begin{array}{c@{\qquad\qquad}c@{\qquad\qquad}c@{\qquad\qquad}c}
    \begin{tikzpicture}[baseline=(b.base),scale=.4]
      \coordinate (b) at (2.5,2);
      \foreach \i in {0,...,5} \foreach \j in {0,...,5} \filldraw (\i,\j) circle (.02);
      \draw (0,0) rectangle (5,5);
      \draw[pattern=north east lines] (2,0) rectangle (3,5);
      \draw[pattern=north east lines] (0,1) rectangle (5,4);
    \end{tikzpicture}
    &
    \begin{tikzpicture}[baseline=(b.base),scale=.4]
      \coordinate (b) at (2.5,2);
      \foreach \i in {0,...,5} \foreach \j in {0,...,5} \filldraw (\i,\j) circle (.02);
      \draw (0,0) rectangle (5,5);
      \draw[pattern=north east lines] (0,2) rectangle (5,3);
      \draw[pattern=north east lines] (1,0) rectangle (4,5);
    \end{tikzpicture}
    &
    \begin{tikzpicture}[baseline=(b.base),scale=.4]
      \coordinate (b) at (2.5,2);
      \foreach \i in {0,...,5} \foreach \j in {0,...,5} \filldraw (\i,\j) circle (.02);
      \draw (0,0) rectangle (5,5);
      \filldraw[pattern=north east lines] 
      (2,0) -- (3,0) -- (3,1) -- (4,1) -- (4,2) -- (5,2) --
      (5,3) -- (4,3) -- (4,4) -- (3,4) -- (3,5) -- (2,5) --
      (2,4) -- (1,4) -- (1,3) -- (0,3) -- (0,2) -- (1,2) --
      (1,1) -- (2,1) -- (2,0) ;
    \end{tikzpicture}
    &
    \begin{tikzpicture}[baseline=(b.base),scale=.4]
      \coordinate (b) at (2.5,2);
      \foreach \i in {0,...,5} \foreach \j in {0,...,5} \filldraw (\i,\j) circle (.02);
      \draw (0,0) rectangle (5,5);
      \draw[pattern=north east lines] (2,2) rectangle (3,3);
    \end{tikzpicture}
    \\
    R
    &
    S
    &
    R\ncap S
    &
    I
    \end{array}
  \]
  (the normal form contains $3$ intervals which do not include~$I$).
\end{example}

\section{Future work}
A toy implementation of the computations described in this paper can be tested
online at~\cite{git}. It allows computing the forbidden region, the state space
(called there the \emph{fundamental region}) and the deadlocks of a program with
loops (we also plan to implement of further analysis of programs, handling
values with abstract domains as explained in the introduction). The positions
for loops are defined there \emph{without} unfolding: this allows handling the
case of forbidden regions within loops, but the theory is more involved (the
execution relation on position does not induce a partial order anymore) and left
for future work. Various practical examples of concurrent programs are given on
the website and the reader is welcome to try out some more of his own. We plan
to investigate, in a near future, the investigation of the combination of this
approach with abstract interpretation techniques in order to be able to
meaningfully handle domains of values.
Finally, we also plan to perform a formalization (in Agda) of the theory
developed there in order to make sure that no corner case is omitted (as it can
be observed in \cref{sec:syntactic}, the operations are defined by case
analysis, requiring to distinguish many possibilities and the situation is even
worse in generalizations).

\bibliography{papers}
\end{document}